\documentclass[11pt,a4paper]{article}

\usepackage{float}
\usepackage{fullpage}
\usepackage{booktabs,wrapfig} 
\usepackage[ruled]{algorithm2e} 

\SetAlFnt{\small}
\SetAlCapFnt{\small}
\SetAlCapNameFnt{\small}
\SetAlCapHSkip{0pt}
\IncMargin{-\parindent}
\usepackage[numbers]{natbib} 
\usepackage{amsthm}
\usepackage{amsfonts}
\usepackage[inline]{enumitem}
\usepackage{mathtools}
\usepackage{hyperref}
\usepackage[svgnames]{xcolor}
\hypersetup{colorlinks={true},urlcolor={blue},linkcolor={DarkBlue},citecolor=[named]{DarkGreen}}

\usepackage{cleveref}
\usepackage{mathtools}
\usepackage{subcaption}

\makeatletter
\renewcommand{\@biblabel}[1]{[#1]\hfill}
\makeatother

\newcommand{\defeq}{\vcentcolon=}  
\DeclarePairedDelimiter\abs{\lvert}{\rvert}%

\newtheorem{theorem}{Theorem}
\newtheorem{definition}{Definition}
\newtheorem{lemma}{Lemma}
\newtheorem{property}{Property}

\crefname{property}{property}{properties}
\Crefname{Property}{Property}{Properties}

\newcommand{\la}{\lambda}

\title{MEV Sharing with Dynamic Extraction Rates} 

\author{
\begin{tabular}{c c c}
& & \\ 
\textbf{Pedro Braga} & \textbf{Georgios Chionas} & \textbf{Piotr Krysta\thanks{P. Krysta is also affiliated with University of Liverpool, Liverpool, UK}} \\
\small{King's College London} & \small{University of Liverpool} & \small{Augusta University}  \\
\href{mailto:pedro.braga@kcl.ac.uk}{\small{\texttt{pedro.braga@kcl.ac.uk}}} & \href{mailto:g.chionas@liverpool.ac.uk}{\small{\texttt{g.chionas@liverpool.ac.uk}}} & \href{mailto:pkrysta@augusta.edu}{\small{\texttt{pkrysta@augusta.edu}}} \\
& & \\
\textbf{Stefanos Leonardos} & \textbf{Georgios Piliouras} & \textbf{Carmine Ventre} \\
\small{King’s College London} & \small{Singapore University of Technology and Design} & \small{King’s College London}\\
\href{mailto:stefanos.leonardos@kcl.ac.uk}{\small{\texttt{stefanos.leonardos@kcl.ac.uk}}} & \href{mailto:georgios@sutd.edu.sg}{\small{\texttt{georgios@sutd.edu.sg}}} 
& \href{mailto:carmine.ventre@kcl.ac.uk}{\small{\texttt{carmine.ventre@kcl.ac.uk}}}
\end{tabular}}

\date{}

\begin{document}

\maketitle

\begin{abstract}
Maximal Extractable Value (MEV) has emerged as a new frontier in the design of blockchain systems.  
In this paper, we propose making the MEV extraction rate as part of the protocol design space. Our aim is to leverage this parameter to maintain a healthy balance between block producers (who need to be compensated) and users (who need to feel encouraged 
to transact). We follow the approach introduced by EIP-1559 and design a similar mechanism to dynamically update the MEV extraction rate with the goal of stabilizing it at a target value. We study the properties of this dynamic mechanism and show that, while convergence to the target can be guaranteed for certain parameters, instability, and even chaos, can occur in other cases. Despite these complexities, under general conditions, the system 
concentrates in a neighborhood of the target equilibrium implying high long-term performance. Our work establishes, the first to our knowledge, dynamic framework for the integral problem of MEV sharing between extractors and users.
\end{abstract}

\section{Introduction}
Smart contract blockchains, such as Ether\-eum \cite{Buterin13}, Cardano \cite{Kiayias17} and Tezos \cite{Goodman14}, unveiled the ability to construct flexible financial products on top of blockchains, which, in turn, led to the emergence of \emph{Decentralized Finance} (DeFi). DeFi markets supports most applications available in traditional finance such as asset exchanges, trading of tokens through Decentralized Exchanges (DEXes), loans, decentralized governance, and stablecoins (currency boards) \cite{Werner22}. In the absence of centralised authorities, DeFi exchanges operate as two-sided markets, with users on the one side submitting transactions and block producers on the other side executing these transactions by broadcasting them into blocks through decentralised networks. \par
However, block producers, e.g., miners in Proof of Work (PoW), validators in Proof of Stake (PoS), or block producers (and searchers) in Ethereum's proposer-builder-separation (PBS) framework have unilateral power to decide which transactions to execute and in which order \cite{Hub21,nisan2023serial,wu2024strategic}. This unfolds a rich strategy space to extract additional profit, usually at the expense of users.
This encompasses any type of excess profit a block producer can obtain by adding, ordering, or censoring transactions and is described as \emph{Maximal Extractable Value} (MEV) \cite{Daian19}. The multiple ways in MEV involve frontrunning, arbitrage and sandwich attacks in DEXes \cite{ethereum_mev, Fer22b}, liquidations \cite{Kao20,Qin21b}, cross-chain MEV \cite{Obadia21} and NFT-mint front running \cite{ethereum_mev}. While many of these practices have been prevented or regulated in traditional finance (e.g., frontrunning is known to breach certain MiFID II standards, such as, best execution ---~\href{https://www.fca.org.uk/markets/regulation-markets-financial-instruments}{fca.org}), 
the lack of centralized authorities renders such mechanisms of little or no practical relevance for DeFi markets. \par
In current DeFi markets, MEV constitutes an intrinsic and dominant feature that is an integral source of concern. The reason is that even though MEV is produced by the economic activity of users (MEV generators), in most cases, its value is shared among other actors including the block producers of DeFi markets (MEV extractors). 
Resolving this tension is far from straightforward. Extreme solutions, in which all MEV goes either back to users or to block producers, are shown to adversely affect the behaviour of the opposite group and end up compromising participation and system security \cite{Qin21,Daian19,Chitra22}.\par
Hence, the question of how to share the generated MEV between these two groups, i.e., users and block producers, constitutes a central problem in current blockchain research and practice \cite{Flashbots_suave,Flashbots_share}. The current debate around MEV \emph{sharing} or mitigation involves two main schools of thought. The first is the \emph{anti-MEV} approach, which states that MEV is harmful since it increases network congestion, higher costs for users, information asymmetries and thus it must be prevented or mitigated by adding additional ordering constraints to the consensus layer \cite{Kelkar20,Kelkar21a,Kelkar21b}. The second is the \emph{pro-MEV} approach which argues that, despite its negative externalities, MEV is endemic in permissionless blockchains, and it can present some benefits for market participant such as balanced liquidity pools and in some cases, improvement of the quality of executing transactions \cite{Barnabe22,Kulkarni23,Flashbots_suave,Chitra22}. However, despite the intense interest of the blockchain community to resolve this problem, to the best of our knowledge, a rigorous proposal with provable guarantees remains elusive. \par

\paragraph*{Our Approach: extraction rate oracles for dynamic MEV sharing}  In this work, we take a neutral approach towards MEV and propose a mechanism to balance the two extreme positions by enshrining MEV sharing in the protocol design. Specifically, we seek to incentivise (and maintain) participation of both users and block producers by stipulating a distribution of the generated MEV value among them. The key element to achieve this is a variable MEV \emph{extraction rate} that determines the fraction of MEV that is retained by block producers (or miners),\footnote{Due to its common usage, we will also use the term \emph{block producers} or \emph{miners} for simplicity as a shorthand of all parties that are involved in the process of \emph{block creation} and MEV extraction, such as validators, block proposers, builders, searchers, etc. \cite{Mon22}.} with the rest going back to users (see, e.g., \cite{Flashbots_share}). For example, an MEV extraction rate of $1$ ($0$, resp.) means that all MEV goes to miners (users, resp.).\par
The implementation of a known, yet possibly variable to reflect market conditions, MEV extraction rate in a DeFi market serves a twofold purpose. On the one hand, it ensures an improved user experience by providing an oracle to users who know what extraction to anticipate. On the other hand, it regulates the incentives provided to each side of the market in order to maintain sufficient participation by each one. While the first goal is straightforward, the second is more involved as it needs to avoid impractical optima (i.e., market configurations with very imbalanced incentives and, thereby, participation). For instance, reducing the MEV extraction to zero will myopically maximise incentives for users, and thereby, throughput, but will force the market to collapse by removing (most) incentives for block producers.\par
To resolve this tension in practice (assuming sufficient demand to exclude pathological cases, as is the case currently), the main signal that such a mechanism needs to consider from the market is the \emph{participation ratio} between users and block producers (measured appropriately, e.g., transactions by users and number of active builders in PBS). Thus, the main goal of the MEV extraction rate is to strike (and maintain) a \emph{balance} between users and miners to a predetermined target ratio, stipulated by safety or scalability considerations. Accordingly, to reflect changing market conditions, the extraction rate is updated after every block (or in regular intervals, e.g., every epoch) following a dynamic update rule. The intensity of the updates is regulated by an \emph{adjustment parameter} that is part of the design space of the mechanism. In simple terms, the rule increases (decreases, respectively) the MEV extraction rate if the ratio of miner/user participation is lower (higher, respectively) than desired.\footnote{To measure deviations from the target, we consider the winning bid of the auction held for the creation of the last block as a proxy of the extracted MEV. While the estimation of MEV is not \emph{the} focus of our model, current research suggests that winning bids in MEV-boost auctions (\cite{Flashbots_boost,Flashbots_explore}) or experimental methods (\cite{Babel21,Qin21,Bartoletti22, Heimbach22}) provide accurate estimates or tight lower bounds to the actual MEV.} \par
While the trade-off between the goals of bidders (here users) and auctioneers (here block producers) has been studied in the context of conventional auctions \cite{Likhodedov04,Diakonikolas12}, the above dynamic update rule is inspired by the design principles of EIP-1559 which has been recently implemented to regulate transaction fee markets under similar conditions \cite{Buterin19a, Roughgarden20, Rou21, Leonardos21,Leonardos2022}. The common characteristic of EIP-1559 and the currently proposed MEV extraction rule is the inclusion of the deviations from the target function in the updates. As shown in \cite{Diamandis23dynamic}, this is equivalent to solving an economically motivated constrained optimization problem by implicitly applying gradient descent on its dual.


\paragraph*{Our Results}
Our objective in this paper is to study the evolution and performance of the above mechanism. In particular, to contribute to the ongoing debate about MEV, our motivating questions are the following: 
\begin{enumerate}
    \item can a protocol-determined, dynamically-updated MEV extraction rate balance the conflicting participation incentives of users and block producers and 
    \item can the design principles of EIP-1559 be useful in accomplishing this task?
\end{enumerate} \par 
Our findings provide significant evidence that we can answer both questions affirmatively. 
First, we formally analyse the evolution of the dynamic mechanism and show that the system exhibits one of the following behaviors depending on the intensity of the updates, as these are determined by the value of the adjustment parameter: (i) \emph{low} intensities: convergence to an optimal MEV extraction rate, which is strictly between 0 and 1 that achieves the target participation ratio, (ii) \emph{intermediate} intensities: periodic behavior or provable chaos (\Cref{thm:chaotic}) or (iii) \emph{large} intensities: collapse to extraction rates of either $0$ or $1$ in which case one of the two groups of this two-sided economy abandons the system. In \Cref{thm:market_liveness} and \Cref{thm:convergence}, we establish formal threshold bounds on the intensity of the updates for which the system either avoids the degenerate states of $0$ and $1$ and remains \emph{live}, i.e., avoids regime (iii), or equilibrates (stabilizes) at the target MEV extraction rate, i.e., reaches regime (i), respectively.\par
Turning to system performance, we mathematically quantify deviations from target participation levels and show that they remain bounded for a broad range of realistic market conditions and parameterizations (\Cref{thm:performance}). This means that the system maintains a healthy proportion between miners and users. The main takeaway is that this desirable behavior is formally exhibited not only when the system equilibrates, a condition that may be rarely met in practice, but, also, when it operates under \emph{out-of-equilibrium (chaotic) conditions}.\par
\paragraph*{Technical overview and robustness of findings}
For the rigorous mathematical derivations in the above results, we consider the expected behavior, i.e., deterministic version, of the system where randomness is resolved through its expectation. Our extensive simulations (cf. \Cref{app:appendix}) demonstrate the robustness of the above system behaviour even when we lift such assumptions (which are only necessary for the mathematical proofs), i.e., when we perturb the update rule to consider closely related designs including MEV-burn (\Cref{sub:burn}), or when we allow for variable adjustment parameters, target ratio and tolerance distributions as is the case in practice (\Cref{sec:robustness_app}).\par
Technically, our paper is most closely related to the literature on complex dynamics that have been previously studied in various theoretical settings \cite{Gal11,Piliouras14,Piliouras23} and, recently, also in market \cite{Che21} and blockchain settings \cite{Leonardos21,Leonardos2022}. Our current work extends these frameworks to include rigorous guarantees on market-liveness and long-term system performance. In stark contrast to conventional equilibrium analysis, gaining insight on non-equilibrium regimes is integral in the study of systems with strong practical orientation \cite{Leonardos2019}.

\paragraph*{Outline}
In \Cref{sec:model}, we develop the framework to describe dynamic MEV extraction rates, and analyse their evolution and properties under stylised mathematical assumptions in \Cref{sec:analysis}. In \Cref{sec:opt_performance,app:appendix}, we study the performance and robustness of the dynamic MEV sharing mechanism under various market conditions. To conclude the paper, we discuss the economic considerations that arise from implementing dynamic MEV extraction rates in practice and summarise our findings along with directions for future work in \Cref{sec:discussion,sec:conclusion}.

\section{Model}
\label{sec:model}

We consider a two-sided, decentralised market consisting of users (or MEV generators) who submit transactions on the one side and miners (or MEV extractors) who process these transactions on the other side. Here, the term \emph{miners} is used as a general term to describe all the entities involved in the MEV supply chain including searchers, builders and proposers.\footnote{We will also use the the equally general term \emph{block producers} interchangeably.} This does not affect our analysis, since the value extracted from the users' transactions flow through all those involved entities. The market is governed by a protocol which can stipulate the MEV extraction rate, $\la\in [0,1]$. Higher (lower) values of $\la$ indicate more (less) aggressive extraction by miners and, hence, a more (less) hostile environment for users. Equivalently, one may think of $1-\la$ as the fraction of MEV that is returned to users through rebates, direct payments or other on or off-chain vehicles.\par
Each user and miner is characterised by their tolerance on $\la$. Thus, if a user (miner) sees an MEV extraction rate higher (lower) than what they can tolerate, they do not participate in the market. The \emph{tolerance distributions} of users and miners are denoted by $F(\la)$ and $G(\la)$, respectively. Thus, in expectation, the proportion of users that participate in the market for a given $\la$ is $\bar{F}(\la)$, where $\bar{F}(\la):=1-F(\la)$ and the proportion of miners is $G(\la)$. We assume that the support of $F,G$ is included in $[0,1]$ and that both $F,G$ are strictly increasing and differentiable with probability density functions $f=F',g=G'$, respectively.\footnote{The strict monotonicity and differentiability assumptions are common in the economics literature and only serve to avoid edge cases that complicate the analysis without adding much intuition. Our results readily extend to more complex distributions. The assumption that the support of $F,G$ is included in $[0,1]$ implies that all users (miners) drop out of the market if $\la=1$ ($\la=0$). In our model, this is just an abstraction. Real models may allow for different distributions with a positive probability mass at $\la=0$ or $\la=1$. In the context of competing decentralised exchanges or in the PBS framework, this abstraction matches reality since users or miners can choose another service (market maker) that offers a more favourable extraction rate.}

\par
To incorporate the MEV extraction rate in the protocol design space, we make the following design choices:
\begin{itemize}[leftmargin=*]
\item \emph{Dynamic updates:} the MEV extraction rate is regularly updated to reflect prevailing market conditions. Updates may take place in regular intervals, e.g., after every block or epoch. This generates a sequence of MEV extraction rates, $(\la_t)_{t\ge0}$, where $t$ indicates the time (e.g., block height or epoch).
\item \emph{Intensity of updates:} The intensity of the updates between consecutive blocks is regulated by a parameter, $\eta$, that can be chosen by the designer.
\item \emph{Target function:} the goal of the mechanism is to achieve a desired (optimal) distribution of MEV between users and miners that will optimally balance their participation at a desired ratio, denoted by $w$ (target). This means that the designer is seeking to find a $\la^*$ such that $\bar{F}(\la^*)=w\cdot G(\la^*)$. Equivalently, if $m_t$ is the MEV at period $t$, then the above equation suggests that the MEV shares distributed to users and miners are equal to a predetermined balance ratio.\footnote{The MEV at period $t$ may refer to the MEV from a transaction bundle processed by a  rollup or relay operator, the MEV of a block proposed through the MEV-Boost auction or even the MEV of a single transaction when considering direct rebates to users, e.g., \cite{Flashbots_share,wu2024strategic,Flashbots_suave}.}
\item \emph{Market liveness}: the boundary MEV extraction rates of $\la=0$ and $\la=1$ correspond to steady states in which the two-sided market collapses to having either only miners ($\la=1$) or only users ($\la=0$). In this case, the market cannot remain \emph{live}. As mentioned above, the exact values of $\la$ at which this happens is an abstraction (that, nevertheless, matches reality in many practical cases under consideration) but which can be customised to different boundary values without affecting the analysis.
\end{itemize}
Based on the above desiderata, and inspired by EIP-1559 transaction fee design, we consider the following dynamic update rule to govern the evolution of the MEV extraction rates, $\la_t$:
\[\la_{t+1}:=\la_t+\eta\cdot m_t\cdot \la_t \cdot (1-\la_t)\cdot \left(\bar{F}(\la_t)- w\cdot G(\la_t)\right).\]
In the update function of the right hand side, the terms $\la_t, (1-\la_t)$ ensure that $\la=0$ and $\la=1$ are fixed points of the system and the term $\bar{F}(\la_t)- w\cdot G(\la_t)$ serves as an indicator of the current performance of the system, i.e., how far the system is from its objective (recall that $m_t$ is the MEV shared). We can rewrite the above formula as 
 \begin{equation}\label{eq:mev_dynamics}
\lambda_{t+1}=\lambda_t + \eta_t \lambda_t(1-\lambda_t)\Delta_w(\la_t) \tag{$\lambda$-MEV}    
\end{equation}
where $\Delta_w(\la_t) := \bar{F}(\lambda_t)-w G(\lambda_t)$ and $\eta_t = \eta \cdot m_t$. Whenever obvious, we will omit the dependence of $\Delta_w(\la)$ on $w$, and we will simply write $\Delta(\la)$ or $\Delta(\la_t)$ when time is relevant. 
For our purposes, it will also be helpful to define the update function $h: [0,1] \to [0,1]$ as
\begin{equation}
\label{eq:update}
h(\lambda) = \lambda+\eta\lambda(1-\lambda)\Delta_w(\la).   
\end{equation}

\section{Evolution of the Dynamic MEV Extraction Rates}
\label{sec:analysis}

The analysis of the dynamical mechanism, defined by \eqref{eq:mev_dynamics}, will be our main focus. As mentioned above, the mechanism in \eqref{eq:mev_dynamics} is characterised by the choice of two design parameters: the adjustment quotient, $\eta$, which regulates the intensity of the updates, and the target, $w$, which is the desired participation ratio between users and miners in the system (equivalently the desired MEV sharing between them). \par 
In this part, we analyse the system mathematically under the assumption of stable market conditions. This entails constant MEV, i.e., $m_t=m$ for some $m>0$, which in turn implies a constant adjustment paremeter, $\eta_t=\eta$ (propertly rescaled) and stable tolerance distributions $F,G$.\footnote{We consider the case of time varying $m_t,F$ and $G$ in our simulations (\Cref{app:appendix}) and demonstrate that the behaviour of the system remains qualitatively equivalent to the one presented here. Note also, that our mathematical analysis considers the expected behaviour of the system, i.e., known values for $F,G$ rather than sample estimates as in reality. However, this case is also covered by our simulations and shown to produce equivalent results.} 
Our first observation is that under our working assumptions, the target function $\Delta(\la)$ satisfies some useful properties.
\begin{property}[Properties of target function $\Delta(\la)$]\label{pro:property}
   Let $w>0$, and assume that $F,G$ are differentiable and strictly increasing with support in $[0,1]$. Then, the function $\Delta(\la):=\bar{F}(\la)-wG(\la)$ satisfies the following properties:
\begin{enumerate}[label=(\roman*),leftmargin=*]
       \item $\Delta(\la)$ is strictly decreasing for $\la\in [0,1]$ with $\Delta(0)=1$, and $\Delta(1)=-w$.
       \item There exists a unique $\la^*\in (0,1)$ such that $\Delta(\la^*)=0$.
       \item There exists a function $K:[0,1]\to \mathbb{R}$ so that 
       $\Delta(\la)=(\la-\la^*)\cdot K(\la)$, and $\lim\limits_{x\to \la^*}K(x)<+\infty$.
   \end{enumerate}
\end{property}
\begin{proof}
Properties $(i), (ii)$ and the existence part of $(iii)$ are straightforward and their proof is omitted. To see why the second part of $(iii)$ holds, observe that since $F,G$ are differentiable in $(0,1)$, de l'H\^{o}pital's rule implies that 
\begin{align*}
  \lim_{\la\to \la^*}K(\la) &=\lim_{\la\to \la^*}\frac{\Delta(\la)}{\la-\la^*}=  \lim_{\la\to \la^*}\left(-f(\la)-wg(\la)\right) = -f(\la^*)-wg(\la^*)<+\infty,
\end{align*}
which holds since the probability distribution functions exist and are continuous by assumption.
\end{proof}

Based on the above, our first observation is that, by construction, \eqref{eq:mev_dynamics} has 3 fixed points. These are $\la =0$, $\la=1$ and a \emph{unique} interior fixed point $\la^*\in(0,1)$ which corresponds to optimal state of the system, $\Delta_w(\la^*)=0$. The first two fixed points are considered \emph{degenerate states} of the system: the former corresponds to no MEV extraction and, thus, miners are not willing to participate. The latter corresponds to a state in which MEV extraction rate is so high that no user is willing to interact with the blockchain.\par
We are primarily interested in analyzing the evolution of the dynamics with respect to the third fixed point, i.e., the unique interior fixed point which corresponds to the target levels of user/miner participation (optimal state). Before we proceed with our results about the behavior of \eqref{eq:mev_dynamics}, we introduce a necessary definition.

\begin{definition}[Directionally stable fixed point]
\label{def:stable}
Let $\{\la_t\}_{t \ge 0}$ be a one-dimensional dynamical system determined by a function $h: \mathbb{R} \to \mathbb{R}$, so that $\la_{t+1}:=h(\la_t)$ for every $t\ge0$, and let $\la^*$ be a fixed point of $h$, i.e., $h(\la^*)=\la^*$. Then $\la^*$ is called \emph{directionally stable} for $\{h_t\}_{t \ge 0}$, if for every $t \ge 0$ such that $\la_t \neq \la^*$ it holds that $\left(h(\la_t)-\la_t \right)/(\la_t - \la^*)<0$.
\end{definition}

In other words, if a fixed point is \emph{directionally stable} then the dynamical system moves towards the direction of this fixed point at every iteration. We will now proceed to show that in \eqref{eq:mev_dynamics}, the fixed point $0< \la^* <1$ is \emph{directionally stable}.

\begin{lemma}
\label{lem:stable}
The unique interior fixed point, $\la^*$ of the 
\eqref{eq:mev_dynamics} is directionally stable, i.e., it holds that $h(\la)>\la$ whenever $0<\la<\la^*$ and $h(\la)<\la$ whenever $\la^*<\la<1$.
\end{lemma}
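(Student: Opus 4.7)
The plan is to reduce the claim to the sign of $\Delta_w(\la)$ via the algebraic structure of $h$, and then invoke \Cref{pro:property} to control that sign on each side of $\la^*$. This is essentially a direct verification from the definitions, so I expect no serious technical obstacle; the argument will be short.

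First I would write out $h(\la) - \la$ explicitly. By the definition \eqref{eq:update}, we have
\[
h(\la) - \la = \eta\, \la\,(1-\la)\,\Delta_w(\la).
\]
For any $\la \in (0,1)$, the factor $\la(1-\la)$ is strictly positive, and $\eta > 0$ by assumption. Hence the sign of $h(\la) - \la$ coincides with the sign of $\Delta_w(\la)$.

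Next I would appeal to \Cref{pro:property}. Part (i) states that $\Delta_w$ is strictly decreasing on $[0,1]$ with $\Delta_w(0)=1>0$ and $\Delta_w(1)=-w<0$, and part (ii) identifies $\la^*$ as the unique zero of $\Delta_w$ in $(0,1)$. Therefore $\Delta_w(\la) > 0$ for $0 < \la < \la^*$ and $\Delta_w(\la) < 0$ for $\la^* < \la < 1$. Combined with the displayed identity, this yields $h(\la) > \la$ on $(0,\la^*)$ and $h(\la) < \la$ on $(\la^*,1)$, which is exactly the directional stability condition of \Cref{def:stable} (after observing that $\la - \la^*$ has the opposite sign to $h(\la)-\la$ in each regime, so the ratio $(h(\la)-\la)/(\la-\la^*)$ is negative).

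The only care needed is to note that the conclusion concerns the \emph{dynamical} trajectory, not just $h$ viewed as a map; but since directional stability in \Cref{def:stable} is a pointwise condition on $h$, and we have verified it at every $\la \in (0,1)\setminus\{\la^*\}$, the lemma follows. No case analysis beyond the two sides of $\la^*$ is required, and no assumption on $\eta$ beyond positivity enters the argument.
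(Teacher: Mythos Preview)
Your proof is correct and follows essentially the same approach as the paper: both compute $h(\la)-\la=\eta\la(1-\la)\Delta_w(\la)$, observe that $\eta\la(1-\la)>0$ on $(0,1)$, and then use the monotonicity of $\Delta_w$ together with $\Delta_w(\la^*)=0$ to determine the sign on each side of $\la^*$. If anything, your write-up is slightly more explicit in invoking \Cref{pro:property} and retaining the factor $\eta$, but the argument is the same.
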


\begin{proof}
First, for $\la>\la^*$ it holds that $\Delta_w(\la)<0$ since $\Delta_w(\la^*)=0$ and $\Delta_w$ is a decreasing function and hence $h(\la)-\la = \la(1-\la)\Delta_w(\la)<0$. Thus, $(h(\la)-\la)/(\la - \la^*)<0$. Similarly, for $\la< \la^*$ it holds that $\Delta_w(\la)>0$ and subsequently $(h(\la)-\la)/(\la - \la^*)<0$.
This concludes the proof that $\la^*$ is \emph{directionally stable}.   
\end{proof}

Directional stability secures that whenever a trajectory reaches a point close to a degenerate state, i.e.,  $\la=0$ or $\la=1$, the next step will move towards the fixed point $\la^*$. Due to the factors $\la$ (resp. $(1-\la)$), the updates close to the boundaries will be small. This trend will continue until the trajectory reaches or overshoots the fixed point $\la^*$. \par
An important implication of the previous observation is that the system can reach the degenerate state $\la=0$ at time $t+1$ only when $\la_t>\la^*$, and similarly, it can reach the degenerate state $\la=1$ only when $\la_t<\la^*$. Further elaborating on this, we can derive a condition on the intensity of system updates, to ensure that these two boundary fixed points will \emph{never} be reached by the dynamics, for any interior starting point, $\la_0\in(0,1)$. 

\begin{theorem}(Market-Liveness)
\label{thm:market_liveness}
For any pair of tolerance distributions, $F,G$, target $w$, and starting point $\la_0\in (0,1)$, it holds that $\la_t\in(0,1)$ for any $t>0$ as long as    
$$\eta<\min{\left\{\frac{1}{w(1-\lambda^*)},\frac{1}{\la^*}\right\}}.$$
where $\la^*$ is the unique solution of the equation $\Delta(\la)=0$ in $(0,1)$.
\end{theorem}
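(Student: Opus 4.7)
The plan is to argue by induction on $t$: assuming $\la_t \in (0,1)$, show that $\la_{t+1} = h(\la_t) \in (0,1)$ under the stated bound on $\eta$. Since $\la^*$ is a fixed point of $h$, the case $\la_t = \la^*$ is immediate, so I would split into $\la_t \in (0, \la^*)$ and $\la_t \in (\la^*, 1)$. In each case, directional stability (Lemma~1) already delivers one of the two required strict inequalities for free: when $\la_t < \la^*$ we get $h(\la_t) > \la_t > 0$, and when $\la_t > \la^*$ we get $h(\la_t) < \la_t < 1$. All the real work is in the opposite direction in each case.

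For $\la_t \in (0, \la^*)$, I would factor
\[
1 - h(\la_t) = (1 - \la_t)\bigl(1 - \eta\, \la_t\, \Delta(\la_t)\bigr),
\]
which reduces the goal $h(\la_t) < 1$ to $\eta\, \la_t\, \Delta(\la_t) < 1$. Since $\Delta$ is strictly decreasing with $\Delta(0)=1$ (Property~1(i)), we have $\la_t\, \Delta(\la_t) < \la^* \cdot 1 = \la^*$, and the hypothesis $\eta < 1/\la^*$ closes this case. For $\la_t \in (\la^*, 1)$, I would factor
\[
h(\la_t) = \la_t\bigl(1 + \eta(1-\la_t)\, \Delta(\la_t)\bigr),
\]
which reduces $h(\la_t) > 0$ to $\eta(1-\la_t)\abs{\Delta(\la_t)} < 1$, since $\Delta(\la_t) < 0$ on this interval. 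Monotonicity of $\Delta$ together with $\Delta(1)=-w$ yield $\abs{\Delta(\la_t)} \le w$, so $(1-\la_t)\abs{\Delta(\la_t)} < w(1-\la^*)$, and the hypothesis $\eta < 1/\bigl(w(1-\la^*)\bigr)$ closes this case.

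I do not expect any serious obstacle: the whole argument is a clean factoring of $h(\la)$ and $1 - h(\la)$ combined with the boundary values and strict monotonicity provided by Property~1. The only subtlety worth flagging is that all four relevant inequalities must remain \emph{strict}; this follows from $\la_t$ being strictly separated from both endpoints of $[0,1]$ together with $\Delta$ being strictly (not merely weakly) decreasing, which together rule out the trajectory landing exactly on a degenerate fixed point. Propagating the invariant by induction then gives $\la_t \in (0,1)$ for every $t > 0$, and taking the minimum of the two thresholds on $\eta$ yields the stated bound.
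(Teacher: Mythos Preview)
Your proof is correct and follows essentially the same approach as the paper: use directional stability to handle one inequality for free in each of the cases $\la_t<\la^*$ and $\la_t>\la^*$, and then bound $\Delta$ by its boundary values $\Delta(0)=1$ and $\Delta(1)=-w$ together with $\la_t<\la^*$ (respectively $1-\la_t<1-\la^*$) to obtain the other. Your factorizations $1-h(\la)=(1-\la)\bigl(1-\eta\la\,\Delta(\la)\bigr)$ and $h(\la)=\la\bigl(1+\eta(1-\la)\,\Delta(\la)\bigr)$ are a slightly cleaner packaging of exactly the same inequalities the paper writes as $\Delta(\la)<1/(\eta\la)$ and $\Delta(\la)>-1/(\eta(1-\la))$.
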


\begin{proof}
We first consider the case of avoiding the fixed point at $\la=0$. If $0<\la_t<\la^*$ for some $t\ge0$, then by \Cref{lem:stable}, it holds that $\la_{t+1}=h(\la_t)>\la_t$ which implies, in particular, that $h(\la_t)\neq0$. Thus, for the dynamics to reach $0$, it must be the case that $\la_t>\la^*$ and $h(\la_t)\le 0$ (recall that if $h(\la_t)<0$, then the $\la_{t+1}$ is projected back to $0$). To exclude this case, we need to ensure that $h(\la_t)>0$ for all $\la>\la^*$ which can be written as
\begin{equation}\label{eq:star}
    \Delta(\la)>\frac{-1}{\eta(1-\la)}, \quad \text{for all } 1>\la> \la^*.
\end{equation}
Since $\Delta(\la)=\bar{F}(\la)-wG(\la)>-w$, and $\frac{-1}{\eta(1-\la^*)}>\frac{-1}{\eta(1-\la)}$ for all $1>\la>\la^*$, a sufficient condition for the inequality in \eqref{eq:star} to hold is that $-w >\frac{-1}{\eta(1-\la^*)}$ or equivalently that $\eta<\frac{1}{w(1-\la^*)}$. The case of the fixed point at $\la=1$ is similar. In particular, if $\la^*<\la_t<1$ for some $t\ge0$, then, by \Cref{lem:stable}, the dynamics will decrease in the next step. Thus, the only case for \eqref{eq:mev_dynamics} to reach $1$ is that $\la_t<\la^*$ and $h(\la_t)\ge1$. As before, to exclude this case, we need to ensure that $h(\la_t)<1$ for all $0<\la<\la^*$ which can be written as

\begin{equation}\label{eq:star2}
    \Delta(\la)<\frac{1}{\eta\la}, \quad \text{for all } 0<\la <\la^*.
\end{equation}
Since $\Delta(\la)=\bar{F}(\la)-wG(\la)<1$, and $\frac{1}{\eta\la}>\frac{1}{\eta\la^*}$ for all $0<\la<\la^*$, a sufficient condition for the inequality in \eqref{eq:star2} to hold is that $1<\frac{1}{\eta\la^*}$ or equivalently that $\eta<\frac{1}{\la^*}$. Putting these two conditions together, we obtain the claim, namely that $\eta<\min{\left\{\frac{1}{\la^*},\frac{1}{w(1-\la^*)}\right\}}$.    
\end{proof}

\Cref{thm:market_liveness} provides a sufficient condition for the values of the control parameter $\eta$ for which \eqref{eq:mev_dynamics} never reaches the degenerate states $\la\in\{0,1\}$. We next turn our attention to the evolution of the system for lower values of $\eta$, i.e., for values of $\eta$ below the upper bound of \Cref{thm:market_liveness}. As we show in \Cref{thm:convergence}, at the other extreme, i.e., for sufficiently low values of $\eta$, the \eqref{eq:mev_dynamics} converge to the fixed point $\la^*$.

\begin{theorem}[Convergence to $\la^*$]
\label{thm:convergence}
For any pair of tolerance distributions, $F,G$, target $w$ and starting point $\la_0\in(0,1)$, the sequence $\{\la_t\}_{t>0}$ generated by \eqref{eq:mev_dynamics} converges to $\la^*$, for any control parameter $\eta>0$ such that
\[
\eta\le\inf_{\lambda\neq\lambda^*}\frac{\left(\lambda^*+\lambda\right)}{\lambda^2(1-\lambda)K(\la)}.
\]
\end{theorem}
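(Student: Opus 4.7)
The plan is to exploit the factorization $\Delta(\la) = (\la - \la^*)K(\la)$ from \Cref{pro:property}(iii) and treat \eqref{eq:mev_dynamics} as a contractive map around $\la^*$. Substituting the factorization into the update $h$ of \eqref{eq:update} yields
\[
h(\la) - \la^* \;=\; (\la - \la^*)\bigl[\,1 + \eta\,\la(1-\la)\,K(\la)\,\bigr],
\]
so the distance $|\la_t - \la^*|$ evolves purely through the multiplicative factor $\bigl|1 + \eta\,\la_t(1-\la_t)K(\la_t)\bigr|$. Because $K(\la) < 0$ for every $\la \neq \la^*$ (since $\Delta$ is strictly decreasing through its zero at $\la^*$), this factor is automatically $< 1$, and the remaining task is to show it stays strictly above $-1$ on $(0,1) \setminus \{\la^*\}$.

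The first step is to verify that the hypothesis on $\eta$ --- understood with $|K(\la)|$ in the denominator, the only interpretation making the right-hand side a positive quantity given $K < 0$ --- implies $\eta\,\la(1-\la)|K(\la)| < 2$ uniformly in $\la$, which is equivalent to the strict contraction $|h(\la) - \la^*| < |\la - \la^*|$. Comparing $(\la^* + \la)/\la^2 = 1/\la + \la^*/\la^2$ with $2/\la$ shows the theorem's bound is strictly tighter than this naive condition precisely on $\la > \la^*$ and strictly looser on $\la < \la^*$. This asymmetry suggests a case analysis on the sign of $\la - \la^*$, with the $\la > \la^*$ branch binding (overshooting there drives the orbit toward the boundary $0$); on the $\la < \la^*$ branch, a complementary bound using $\Delta(\la) \le 1$ from \Cref{pro:property}(i) must be invoked to supply the matching contraction.

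Once strict contraction is established, two consequences follow at once: (a) the orbit $\{\la_t\}$ is confined to $[\la^* - |\la_0 - \la^*|,\;\la^* + |\la_0 - \la^*|] \subset (0,1)$, reproducing \Cref{thm:market_liveness} as a by-product; and (b) the sequence $d_t := |\la_t - \la^*|$ is monotone nonincreasing and bounded below, hence converges to some $d_\infty \ge 0$. To finish, suppose for contradiction that $d_\infty > 0$. Applying Bolzano--Weierstrass to the bounded orbit extracts a subsequence $\la_{t_k} \to \bar\la$ with $|\bar\la - \la^*| = d_\infty$, so $\bar\la \neq \la^*$. Continuity of $h$ gives $\la_{t_k+1} \to h(\bar\la)$, so $|h(\bar\la) - \la^*| = d_\infty$; but strict contraction at $\bar\la$ yields $|h(\bar\la) - \la^*| < d_\infty$, a contradiction. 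Hence $d_\infty = 0$.

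The main obstacle is the first step: reconciling the specific asymmetric form of the stated bound with the symmetric naive contraction condition $\eta\,\la(1-\la)|K(\la)| < 2$. The asymmetry indicates the two regimes $\la < \la^*$ and $\la > \la^*$ are handled differently, with an additional structural inequality --- presumably $-\Delta(\la) \le w$ from \Cref{pro:property}(i), or a direct argument showing that iterates starting below $\la^*$ overshoot into a region where the tighter bound still applies --- called in to compensate on the weaker side. Modulo this bookkeeping, the contractive-map-plus-continuity scheme is standard, and convergence follows routinely from the monotone decrease of $d_t$ and the continuity of $h$.
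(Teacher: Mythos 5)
Your strategy (Euclidean contraction of $d_t=|\la_t-\la^*|$) is genuinely different from the paper's, which instead shows that the \emph{logarithmic} potential $\Phi(\la)=(\ln\la-\ln\la^*)^2$ decreases along trajectories. That difference is not cosmetic: the asymmetric threshold $\frac{\la^*+\la}{\la^2(1-\la)|K(\la)|}$ in the statement is exactly the condition $\la\,h(\la)<(\la^*)^2$ (resp.\ $>$) that makes the log-distance shrink, and it falls out of the factorization $(\ln h(\la)-\ln\la^*)^2-(\ln\la-\ln\la^*)^2=\ln\bigl(\tfrac{h(\la)}{\la}\bigr)\cdot\ln\bigl(\tfrac{\la h(\la)}{(\la^*)^2}\bigr)$ together with \Cref{lem:stable}. (You are right, incidentally, that the denominator must be read as $|K(\la)|$; the paper is sloppy about the sign of $K$.)

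The gap in your version is precisely the step you defer as ``bookkeeping,'' and it is not bookkeeping --- it is the whole theorem. Your scheme needs $\eta\,\la(1-\la)|K(\la)|<2$ at every $\la$, i.e.\ $\eta<\frac{2}{\la(1-\la)|K(\la)|}$, and as you yourself compute, the stated hypothesis is \emph{weaker} than this exactly on $\la<\la^*$ (where $\frac{\la^*+\la}{\la}>2$). So the hypothesis permits an iterate at $\la<\la^*$ to overshoot into $\bigl(2\la^*-\la,\ (\la^*)^2/\la\bigr]$ --- a nonempty interval by AM--GM --- where the Euclidean distance to $\la^*$ strictly \emph{increases} even though the log-distance decreases. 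Hence $d_t$ need not be monotone, claim (a) does not reproduce \Cref{thm:market_liveness}, and the Bolzano--Weierstrass endgame has nothing to stand on. The patch you gesture at ($\Delta(\la)\le1$ from \Cref{pro:property}(i)) does not close this: it bounds the step size by $\eta/4$ but gives no comparison with $2(\la^*-\la)$. To make your route work you would have to strengthen the hypothesis to $\eta\le\inf_\la\frac{2}{\la(1-\la)|K(\la)|}$, which is an incomparable (and on the lower branch strictly stronger) condition; to prove the theorem as stated you need to switch to the logarithmic metric, under which the stated bound is tight by construction.
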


\begin{proof}
We will prove that the function $\Phi(\lambda):=(\ln {\lambda} - \ln {\lambda^\ast})^2$ decreases along any trajectory of the dynamics, i.e., that $\Phi(h(\lambda))<\Phi(\lambda)$ for any $\lambda \in (0,1)$ with $\lambda \neq \lambda^*$. This is equivalent to showing that 
\begin{equation}\label{eq:target}
(\ln h(\lambda) - \ln \lambda^\ast)^2 < (\ln \lambda - \ln \lambda^\ast)^2, \text{ for any }\lambda \neq \lambda^\ast.
\end{equation} In other words, we will show that $\Phi(\lambda)$ acts as a potential function for the \eqref{eq:mev_dynamics} which directly implies convergence. To proceed, we rewrite the above inequality as
$(\ln h(\la) - \ln \la^\ast)^2 - (\ln \la - \ln \la^\ast)^2 = (\ln h(\la) - \ln \la) \cdot (\ln h(\la) + \ln \la  - 2\ln \la^\ast)
= \ln\left(\frac{h(\la)}{\la} \right)\cdot \ln \left( \frac{\la h(\la)}{(\la^\ast)^2} \right).
$
We will consider two cases, depending on whether $\la<\la^\ast$ or $\la>\la^\ast$:
\begin{itemize}[leftmargin=*]
\item $\la<\la^*$: In this case, it holds that $h(\la)>\la$, see \Cref{def:stable} and \Cref{lem:stable}, which implies that $h(\la)/\la>1$. Hence, $\ln\left(\frac{h(\la)}{\la} \right)>0$. Thus, to obtain the desired inequality, we need to select $\eta>0$ so that the argument of the second $\ln$ on the right hand side of the above equation is less than $1$, i.e., $\la h(\la)/(\la^*)^2<1$. Solving this inequality for $\eta$ yields the inequality $\eta< \frac{(\la^*-\la)(\la^*+\la)}{\la^2(1-\la)\Delta(\la)}$, where we used that $\Delta(\la)>0$ for all $\la<\la^*$. Since this inequality must hold for any $\la<\la^*$, we obtain the threshold
$\eta \le \inf\limits_{\la<\la^*}\frac{(\la^*-\la)(\la^*+\la)}{\la^2(1-\la)\Delta(\la)}$.
It remains to show that the threshold on the right hand side remains strictly positive for any distributions. Using \Cref{pro:property}, we can rewrite the threshold as $\inf\limits_{\la<\la^*}\frac{\la^*+\la}{\la^2(1-\la)K(\la)}$. Since $\Delta(0)=1$, it follows that $K(0)=1/\la^*$ which implies that $\lim\limits_{\la\to 0^+}\frac{\la^*+\la}{\la^2(1-\la)K(\la)}= +\infty$, and
$\lim\limits_{\la\to (\la^*)^-}\frac{\la^*+\la}{\la^2(1-\la)K(\la)}= \frac{2}{\la^*(1-\la^*)\cdot\lim\limits_{\la \to (\la^*)^-}K(\la)}$. 
Thus, since $\lim\limits_{\la \to (\la^*)^-}K(\la)$ exists and is less than $+\infty$ by \Cref{pro:property}, the claim follows.

\item $\la>\la^*$: in this case, it holds that $h(\la)<\la$, see \Cref{def:stable} and \Cref{lem:stable}, which implies that $h(\la)/\la<1$. Hence, $\ln\left(\frac{h(\la)}{\la} \right)<0$. Thus, to obtain the desired inequality in \eqref{eq:target}, we need to select $\eta>0$ so that the argument of the second $\ln$ on the right hand side of the above equation is larger than $1$, i.e., $\la h(\la)/(\la^*)^2>1$. Solving this inequality for $\eta$ yields the same inequality as above (note that this time, $\Delta(\la)<0$).
\end{itemize}
Putting the two cases together, we have that the \eqref{eq:mev_dynamics} converge to $\la^*$ whenever $0<\eta\le \inf\limits_{\la\neq\la^*}\frac{(\la^*+\la)}{\la^2(1-\la)K(\la)}$ as claimed.
\end{proof}

Together with \Cref{thm:market_liveness}, \Cref{thm:convergence} paints the picture for extreme values of update intensities, i.e., either sufficiently high (degenerate market), in which the system collapses by reaching either of the two boundary fixed points, or sufficiently low (lazy updates), in which case the system stabilizes to the unique optimal state. Intuitively, the sufficiency condition of \Cref{thm:convergence} suggests that convergence to $\la^*$ occurs when the intensity of the updates is small enough \emph{in comparison} to current market conditions as these are captured by the function $\Delta(\la)$, i.e., the distributions of users' and miners' tolerances. However, low values of the control parameter $\eta$ correspond to a system with slow responses and come at the trade-off of slow responses to otherwise fast-changing market conditions. Thus, such values may be irrelevant for practical applications \cite{Reijsbergen21}.\par
This leaves open the case in which this condition is not satisfied. Interestingly, as we show next, in this case, the system may exhibit arbitrarily complex behavior. Specifically, in \Cref{thm:chaotic}, we prove that for any value of $\eta$, there exist distributions $F,G$, satisfying our assumptions, for which the system does not converge to $\la^*$. In particular, based on the aggressiveness of the updates, the system can become provably chaotic in the sense of \emph{Li-Yorke chaos} \cite{Li75}. Counterintuitively, as we show in \Cref{sec:opt_performance}, this chaotic behavior is not necessarily detrimental to system performance. Although chaotic updates in the MEV extraction rates make their evolution unpredictable, they keep the system non-degenerate, in the sense that they remain bounded away from $0$ and $1$, thus, ensuring participation of both users and miners. Before we proceed to prove that the \eqref{eq:mev_dynamics} become Li-Yorke chaotic, we introduce new necessary notations and definitions. Our technical tools in this part extend the framework of \cite{Leonardos21}.

\begin{definition}[Periodic Orbit and Points]
\label{def:period_points}
Consider a dynamical system, $(\la_t)_{t\ge0}$, defined by a function $h:\mathbb R\to \mathbb R$ so that $\la_{t+1}=h(\la_t)$ for any $t\ge0$. A sequence $\la_1, \la_2, \dots , \la_k$ is called a \emph{periodic orbit of length $k$} of the dynamical system, if $\la_{t+1}= h(\la_t)$ for $1 \le t \le k-1$ and $h(\la_k) = \la_1$. Each point $\la_1, \la_2, \dots , \la_k$ is called periodic point of least period $k$.    
\end{definition}

\begin{definition}[Li-York chaos \cite{Li75}]
\label{def:chaos}
Let $X=[L,U]$ be a compact interval in $\mathbb{R}$ and let $h: X \to X$ define a discrete-time dynamical system $(x_t)_{t \ge 
 0}$ on $X$, so that $x_t \defeq h^t(x_0)$ for any $x_0 \in X$. The dynamical system $(x_{t})_{t \in N}$ is called \emph{Li-Yorke chaotic} if it holds that
\begin{enumerate}
    \item For every $k=1,2, \dots$ there is a periodic point in $X$ with period $k$.
    \item There is an uncountable set $S \subseteq X$ (containing no periodic points), which satisfies the following conditions:
        \begin{itemize}[leftmargin=*]
        \item For every $x \neq y \in S$, $\lim\limits_{t \to \infty} \sup \abs{h^t(x)-h^t(y)}{} >0$ and $\lim\limits_{t \to \infty} \inf \abs{h^t(x) -h^t(y)}{}=0$.
        \item For every points $x \in S$ and $y \in X$, 
 $\lim\limits_{t \to \infty} \sup \abs{h^t(x)-h^t(y)}{} >0$.
    \end{itemize}
    \end{enumerate}

\end{definition}

Famously, a sufficient condition for a system to be \emph{Li-Yorke chaotic} is to have a periodic point of period of $3$ \cite{Li75}. This implication relies on the Sharkovsky Ordering and and it is actually a special case of Sharkovsky's theorem which is presented next \cite{Sharkovsky64}. 

\begin{definition}[Sharkovsky Ordering \cite{Burns11}]
\label{def:sharkovsky}
The following ordering of the set $\mathbb{N}$ of positive integers, is known as the Sharkovsky ordering: 
$3>5>7>9>\cdots$, $3\cdot2>5\cdot2>7\cdot2>9\cdot2>\cdots$, $3\cdot2^n>5\cdot2^n>7\cdot2^n>9\cdot2^n>\cdots$, and $\cdots>2^n>\cdots>2^2>2>1$.
This is a total ordering, i.e., every positive integer appears exactly once in the list. 
\end{definition}

\begin{theorem}[Sharkovsky \cite{Sharkovsky64}]
\label{thm:sharkovsky}
Let $X \subset \mathbb{R}$ be a compact interval and let $h:X \to X$ be a continuous function. If $h$ has a periodic point of least period $m$, then $h$ has a periodic point of least period $l$ for every $l$ such that $m>l$ in the Sharkovsky ordering.
\end{theorem}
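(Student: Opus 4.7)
The plan is to prove Sharkovsky's theorem by building a combinatorial framework around the \emph{Markov graph} of a periodic orbit. The argument rests on two elementary but crucial lemmas derived from the Intermediate Value Theorem (IVT), together with a delicate combinatorial case analysis organized by the Sharkovsky ordering itself.

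First, I would establish the \emph{covering lemma}: if $I, J \subset X$ are closed intervals and $h(I) \supseteq J$, then there exists a closed subinterval $I' \subseteq I$ with $h(I') = J$. This is an immediate consequence of the IVT applied to $h$ restricted to $I$. Iterating this yields the \emph{loop lemma}: whenever $I_0, I_1, \ldots, I_{n-1}$ is a sequence of closed intervals satisfying $h(I_k) \supseteq I_{(k+1) \bmod n}$ for every $k$, there exists a point $x \in I_0$ with $h^n(x) = x$ and $h^k(x) \in I_k$ throughout. The fixed point is produced via a nested-intersection argument that relies on compactness.

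Next, given a periodic orbit $O = \{x_1 < x_2 < \cdots < x_m\}$ of least period $m$, I would partition the convex hull $[x_1, x_m]$ into the $m-1$ elementary subintervals $J_i = [x_i, x_{i+1}]$. Since $h$ permutes the orbit cyclically, each $h(J_i)$ is contained in $[x_1, x_m]$ and covers some union of the $J_k$. The Markov graph of $O$ records the directed edges $J_i \to J_k$ whenever $h(J_i) \supseteq J_k$; by the loop lemma, any directed cycle of length $\ell$ in this graph produces a periodic point of $h$ of period dividing $\ell$. A further standard but finicky argument upgrades this to a periodic point of \emph{least} period exactly $\ell$, by ruling out shortcut loops of smaller length within the chosen cycle.

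The main obstacle, and the combinatorial heart of the proof, is to show that the Markov graph of an orbit of least period $m$ must contain cycles of every length $l$ with $m$ preceding $l$ in the Sharkovsky order. The base case is the period-$3$ orbit $a<b<c$: if $h(a)=b,\ h(b)=c,\ h(c)=a$ (the reverse orientation is symmetric), one verifies directly that $h([a,b]) \supseteq [b,c]$ and $h([b,c]) \supseteq [a,c]$, yielding a self-loop at $[b,c]$ together with the edge $[a,b] \to [b,c] \to [a,b]$; chaining these produces cycles of every positive length and hence periodic points of every period. The general case proceeds by splitting on the $2$-adic valuation of $m$: for odd $m \geq 3$ one identifies an explicit cycle in the Markov graph (the so-called \v{S}tefan cycle) that simultaneously forces all larger odd periods and all periods of the form $2^s q$ that follow in the ordering; for $m = 2^s q$ with $q$ odd and $s \geq 1$, one descends by restricting $h^2$ to an appropriate $h^2$-invariant subinterval and invoking the induction hypothesis at a smaller position in the Sharkovsky order. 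I expect this descent-and-classification step — locating the right invariant intervals and certifying the required covering relations — to be substantially more delicate than the IVT-based machinery that drives the first two paragraphs.
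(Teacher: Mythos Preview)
The paper does not prove this theorem at all: it is stated as a classical result of Sharkovsky, with a citation to \cite{Sharkovsky64} (and the exposition in \cite{Burns11}), and is used purely as a black box in the proof of \Cref{thm:chaotic} and in the discussion of \Cref{sub:periodic}. So there is no ``paper's own proof'' to compare against.

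That said, your outline is a faithful sketch of the standard modern proof (covering lemma, loop lemma, Markov/\v{S}tefan graph of the orbit, and the $2$-adic descent via $h^2$ on an invariant subinterval). If you were actually required to supply a proof here, the places where your sketch would need the most care are exactly the ones you flag: upgrading a length-$\ell$ cycle in the Markov graph to a point of \emph{least} period $\ell$ (one must choose non-repetitive loops and argue the itinerary forces the full period), and the existence of the \v{S}tefan cycle for an arbitrary odd-period orbit (this is where one shows that any period-$m$ orbit with $m$ odd either already contains a \v{S}tefan pattern or forces a smaller odd period, which then bootstraps). For the purposes of this paper, however, none of this is needed; citing the result suffices.
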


Using these results, we will now formally prove that for every intensity, $\eta$, in the MEV extraction rate updates, there exist tolerance distributions $F,G$ such that \eqref{eq:mev_dynamics} become \emph{Li-York chaotic}.

\begin{theorem}[Chaotic Updates]
\label{thm:chaotic}
For every intensity level, $\eta>0$, there exist continuous distributions $F,G$ such that the \eqref{eq:mev_dynamics} become Li-Yorke chaotic.
\end{theorem}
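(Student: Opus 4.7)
My plan is to apply Sharkovsky's theorem (\Cref{thm:sharkovsky}) together with Li-Yorke's classical result that a periodic orbit of least period $3$ is itself sufficient to certify Li-Yorke chaos \cite{Li75}. Thus, fixing an arbitrary intensity $\eta>0$, the problem reduces to constructing continuous distributions $F,G$ (and a target $w$) so that the update map $h$ defined in \eqref{eq:update} admits a periodic orbit of least period $3$.

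The orbit is prescribed directly. Pick three points $\lambda_1 < \lambda_2 < \lambda_3$ in $(0,1)$ and insist on the cycle $\lambda_1 \mapsto \lambda_2 \mapsto \lambda_3 \mapsto \lambda_1$. By \Cref{lem:stable}, this forces $\lambda_2 < \lambda^\ast < \lambda_3$, so that $\Delta(\lambda_1),\Delta(\lambda_2)>0$ while $\Delta(\lambda_3)<0$. Inverting the three identities $h(\lambda_i)=\lambda_{(i \bmod 3)+1}$ uniquely determines
\[
v_i \;\defeq\; \Delta(\lambda_i) \;=\; \frac{\lambda_{(i \bmod 3)+1}-\lambda_i}{\eta\,\lambda_i(1-\lambda_i)}, \qquad i=1,2,3.
\]
For $(v_1,v_2,v_3)$ to lie on the graph of some strictly decreasing $\Delta$, we need $v_1>v_2>v_3$. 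The inequality $v_2>v_3$ is automatic since $v_2>0>v_3$, while $v_1>v_2$ reduces to $(\lambda_2-\lambda_1)/(\lambda_1(1-\lambda_1)) > (\lambda_3-\lambda_2)/(\lambda_2(1-\lambda_2))$, which can always be arranged by taking the jump $\lambda_3-\lambda_2$ small relative to $\lambda_2-\lambda_1$.

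It remains to realise the three prescribed values as $\Delta = \bar F - wG$ for valid tolerance distributions. A convenient route is to let $G$ be uniform on $[0,1]$ (so $G(\lambda)=\lambda$), extend the five points $(0,1), (\lambda_1,v_1), (\lambda_2,v_2), (\lambda_3,v_3), (1,-w)$ to a strictly decreasing $C^1$ curve $\Delta$ (e.g.\ by piecewise-linear interpolation followed by mollification), and then define $F(\lambda) \defeq 1 - \Delta(\lambda) - w\lambda$. For $F$ to be a valid strictly increasing CDF with density, one additionally needs $\Delta'(\lambda) < -w$ pointwise, which amounts to choosing each interpolating slope to lie below $-w$.

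The main obstacle is the bookkeeping: the free parameters $(\lambda_i, w)$ must be chosen jointly as a function of $\eta$ so that simultaneously $v_1<1$, $-w<v_3$, the slope condition $\Delta'<-w$ holds, and the ordering $v_1>v_2>v_3$ is preserved. I would handle this by first taking $w$ large (which accommodates $v_3$ and allows the steep slopes required by the $F$-density condition) and then picking the $\lambda_i$ with jumps of order $\eta\,\lambda_i(1-\lambda_i)$ scaled so that $v_1$ stays safely below $1$; the construction of the smooth interpolation is then routine. Once the existence of the period-3 orbit is secured, \Cref{thm:sharkovsky} and Li-Yorke deliver the Li-Yorke chaos conclusion.
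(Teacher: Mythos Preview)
Your high-level plan---engineer a period-$3$ orbit and invoke Li--Yorke---matches the paper's. The construction differs: the paper takes $F=G$ supported on a shrinking interval $[a,b]\subset(0,1)$ and checks the Li--Yorke point configuration $h^3(\la_0)\le \la_0 < h(\la_0) < h^2(\la_0)$ directly as $b\to a^+$, whereas you prescribe the orbit points first and reverse-engineer $\Delta$ by interpolation with $G$ uniform.

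Your route can be completed, but the sketch has a genuine gap precisely where you flag the difficulty. Fixing $G$ uniform turns $f>0$ into the pointwise constraint $\Delta'(\la)<-w$, so each of the four secant slopes through $(0,1),(\la_1,v_1),(\la_2,v_2),(\la_3,v_3),(1,-w)$ must lie strictly below $-w$. Your heuristic ``take $w$ large'' makes three of these four conditions \emph{harder}: the first secant lies below $-w$ iff $v_1<1-w\la_1$ (far tighter than the $v_1<1$ you aim for), and the two interior secants, whose slopes are fixed once the $\la_i$ are placed, must also beat $-w$; only the last secant condition $v_3>-w\la_3$ benefits from large $w$. You do not verify that all four can hold simultaneously for the given $\eta$ and the resulting $v_i$'s. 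They can---for instance by clustering the $\la_i$ near $1/2$ with gaps small relative to $\eta$ and choosing $w$ moderate, below $\min\{2,2/\eta\}$---but that is not the regime your sketch describes, and in any case this verification is the crux of the argument and cannot be waved away. The paper sidesteps all of this: concentrating $F,G$ on $[a,b]$ makes $\Delta$ plunge from $1$ to $-w$ over a vanishing interval, and the Li--Yorke configuration drops out of a single limit as $b\to a^+$.
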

\begin{figure*}[t]
    \centering
    \includegraphics[width=0.95\linewidth]{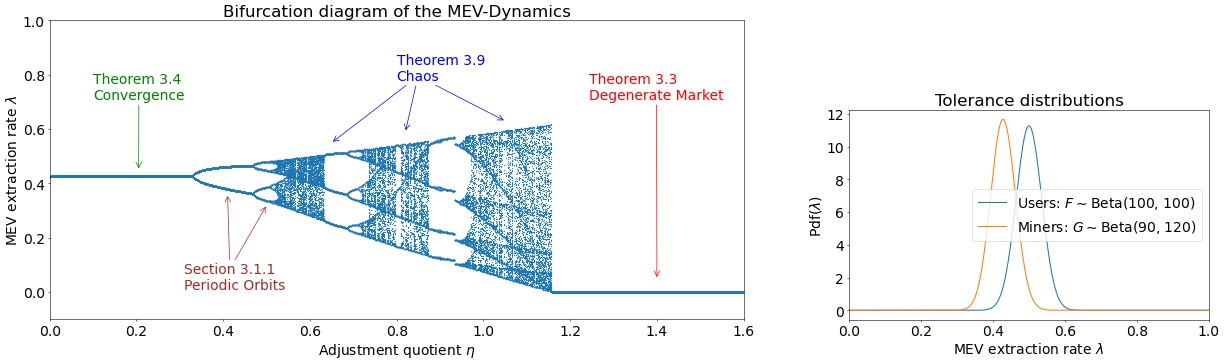}
    \caption{Bifurcation diagram for the \eqref{eq:mev_dynamics} with respect to the adjustment intensity $\eta$ (left panel). The tolerance distributions of miners and users are shown in the panel on the right. For low values of $\eta$, the dynamics converge to $\lambda^*$ (\Cref{thm:convergence}), whereas for large values of $\lambda$, the dynamics reach the boundary and get trapped at the corresponding fixed point (in this case $0$) in which case, the two-sided market collapses (\Cref{thm:market_liveness}). For intermediate values of $\eta$, the dynamics are either provably chaotic (\Cref{thm:chaotic}) or periodic with periods of difference density (\Cref{sub:periodic}).}
    \label{fig:main}
\end{figure*}

\begin{proof}
The proof is constructive and proceeds by creating an instance for which the dynamics have a periodic point of least period $3$. The instance depends on the $\eta$ and thus, it can apply to any $\eta>0$. Specifically, we consider a case in which both users and miners have the same tolerance distributions on $[a,b]$ with $0<a<b<1$. In this case, the update function \eqref{eq:update} can be written as
\[
h(\lambda)= \begin{cases}
\lambda(-\eta\lambda +\eta +1), & 0 \le \lambda < a, \\
\lambda(-\eta\lambda \Delta(\lambda) + \eta\Delta(\lambda) +1), & a \le \lambda <b, \\
\lambda(\eta w \la -\eta w +1) & b \le \lambda \le 1,
\end{cases},
\]
with $\Delta(\la) := \frac{b+wa-\la(1+w)}{b-a}$. We will prove that for any fixed $\eta>0$, there exist $a,b$ such that the dynamical system has a periodic orbits of period 3. Since the update function, $h$ is continuous, we will prove that by finding points $\la_0, \la_1,\la_2, \la_3$ such that \[\la_1= h(\la_0), \,\,\la_2=h(\la_1)=h^2(\la_0),\,\, \la_3=h(\la_2)=h^3(\la_0)\] which satisfy $\la_3 \le \la_0 <\la_1 < \la_2$. Firstly, for $0 \le \la < a$, the update function $h$ is increasing and $\la<h(\la)$. Thus, there exists $0< \la_0<a$ such that $h(\la_0)=a$. Let $\la_1=a, \la_2=h(a)$ and $\la_3=h^2(a)$. Then, 
$h(a)=a+ \eta a(1-a)$ and, hence, $a<h(a)$. Moreover $\Delta(h(a))= 1- \eta c $, where $c=\frac{a(1-a)(1+w)}{b-a}$ and, as $b\rightarrow a^+$, $c \rightarrow \infty$. Lastly, we compute \[h(h(a))= h(a) +\eta h(a)(1-h(a))\Delta(h(a)).\] After some trivial algebra, we get that \[h(h(a))= -(h(a) - (h(a))^2)\eta^2 c + (h(a) - (h(a)^2))\eta +h(a).\]
We want to prove that $\la_3 = h(h(a)) \le \la_0$ which is equivalent to
\[-(h(a) - (h(a))^2)\eta^2 c + (h(a) - (h(a)^2))\eta +h(a) \le \la_0.\] 
This is true as $b \rightarrow a^+$, since $h(a)-(h(a))^2>0$. Thus we get $\la_3 \le \la_0 <\la_1 < \la_2$, which concludes the proof.
\end{proof}
Intuitively, the construction in the proof of \Cref{thm:chaotic} suggests that as the tolerance of users and/or miners becomes more concentrated around some value, then the dynamics are more likely to become chaotic since even small updates in the MEV extraction rate have significant impact in the participation rates of the system.

\subsection{Visualising the dynamic mechanism}\label{sub:visualizations}
In order to gain a better intuition on the results of the previous section, in this part, we provide visualizations of the \eqref{eq:mev_dynamics} that showcase the various cases discussed above.\par 
To instantiate the dynamics, we select specific instances of Beta distributions for both users and miners because these distributions have the desirable properties that (1) their support range is $[0,1]$, matching the values of possible MEV extraction tolerance for miners and users, and (2) they have the expressive capacity to capture many different forms of valuation distributions. The results for a specific instance are provided in the \emph{bifurcation diagram} in \Cref{fig:main}. \par
In the bifurcation diagram, the horizontal axis corresponds to values of the update instensity, $\eta$. It is straightforward to see, that for any fixed value of $\eta$, the \eqref{eq:mev_dynamics} either converge (low values of $\eta)$ or are restricted in a bounded area around $\la^*$. This bounded area expands as $\eta$ increases till it reaches one of the boundaries (in this case $\la=0$) which implies that the dynamics get trapped there (with dire implications for the two-sided market).\par
The instance depicted in the left panel of \Cref{fig:main} captures all cases described in our previous formal analysis. From \Cref{fig:main}, we can also see some regimes, e.g., around $\eta\approx0.4$, in which the dynamics only have \emph{periodic orbits}. In fact, using \Cref{thm:sharkovsky}, we can further argue about the behavior of the system in these cases.

\subsubsection{Regions of periodic orbits}\label{sub:periodic}
To better understand the orbits of \Cref{eq:mev_dynamics}, we visualize the function $h^{(k)}(\la)$ and observe its fixed points, $\la=h^{(k)}(\la)$ for various values of $k\in \mathbb {N}$. In particular, \Cref{thm:sharkovsky} implies that if we can find $k\ge2\in \mathbb N$ such that the function $h(\la)$ has periodic points of least period $k$, 
but \emph{not} of least period $k+1$, then the dynamics generated by $h$ have periodic orbits with all possible periods that precede $k$ in the Sharkovsky ordering. As mentioned above, such regimes are visible around $\eta=0.4$ in \Cref{fig:main}. \par
In \Cref{fig:periodic}, we visualize one such instance that corresponds to areas of periodic behavior for $\eta=0.5$. Note that for $\eta$ slightly larger than $0.6$, the bifurcation diagram shows a range of parameter values for which apparently the only cycle has period $3$. In fact, according to \Cref{thm:sharkovsky}, there must be cycles of all periods there, but they are not stable and therefore not visible on the computer-generated simulation.
\begin{figure}[t]
    \centering
    \includegraphics[width=0.3\linewidth]{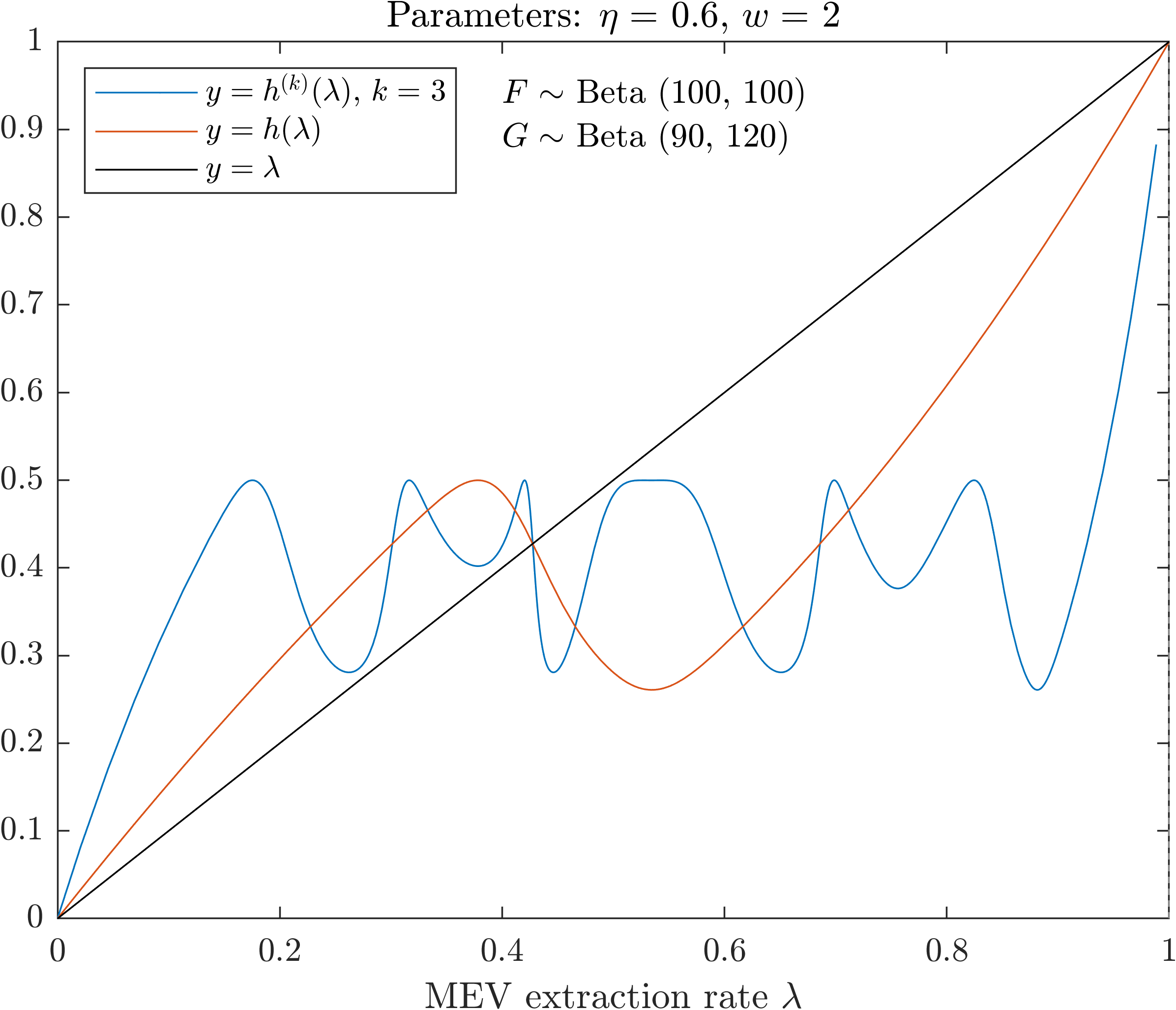}\hspace{10pt}
    \includegraphics[width=0.3\linewidth]{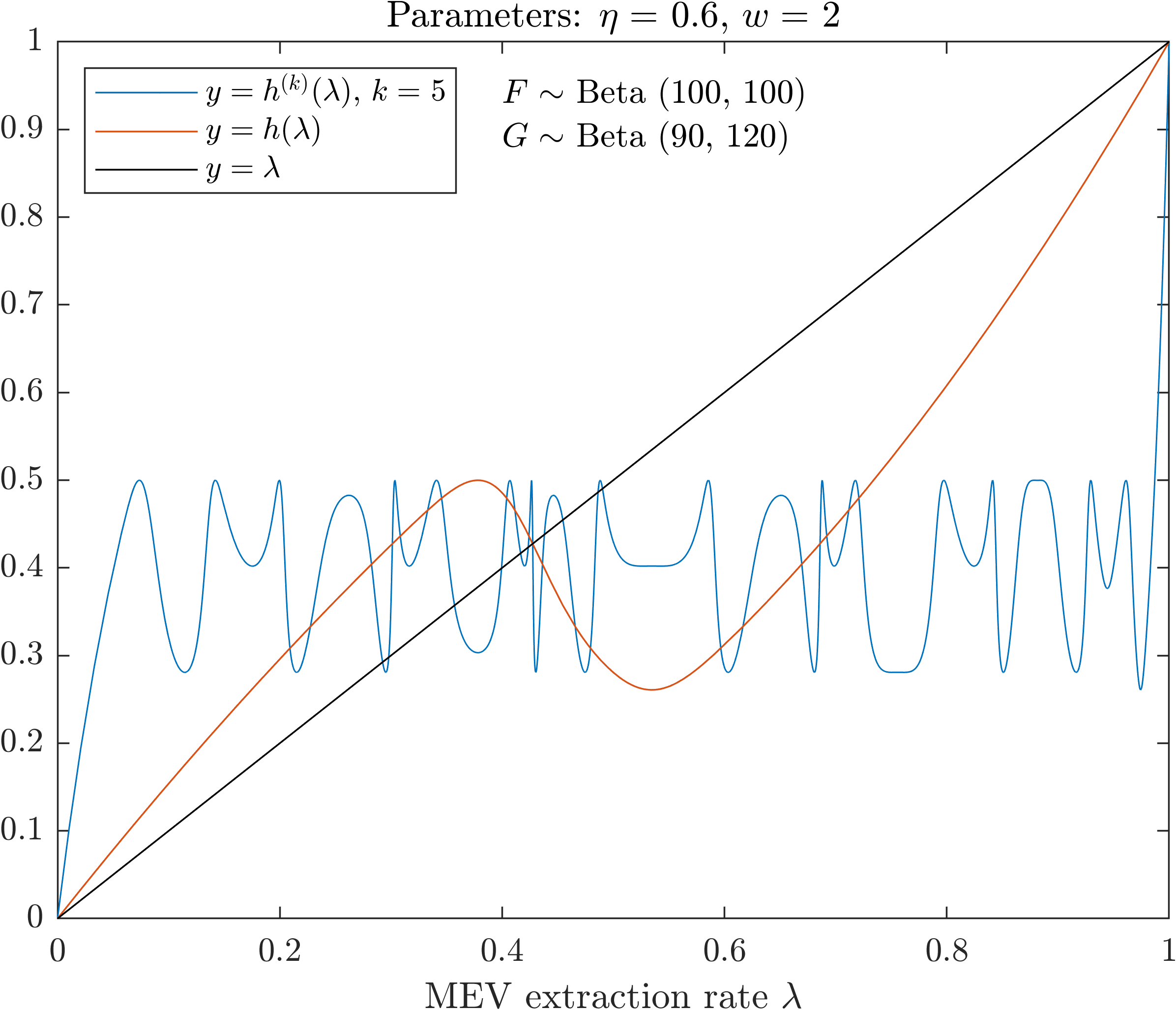}\hspace{10pt}
    \includegraphics[width=0.3\linewidth]{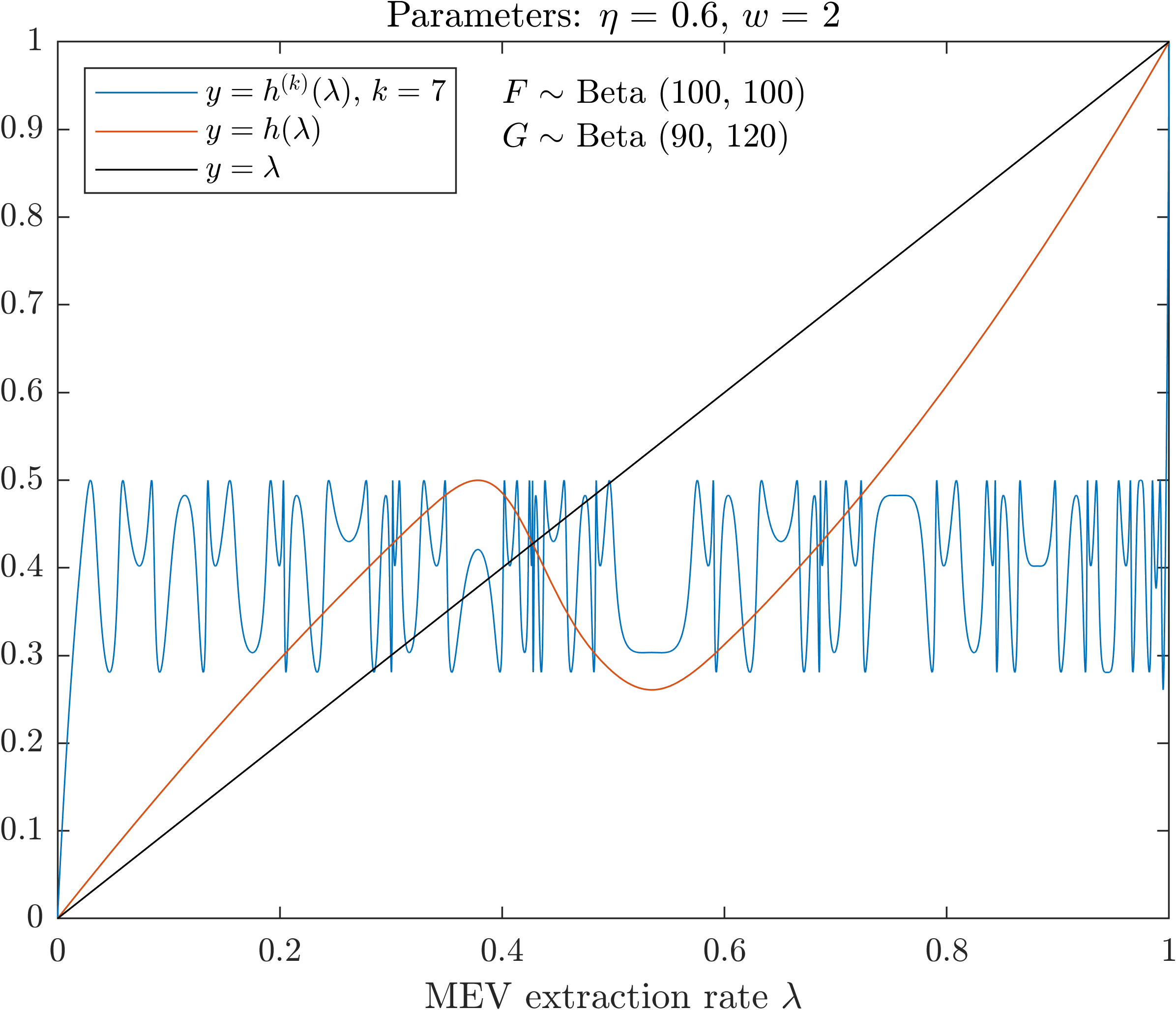}
    \caption{Regions of periodic behavior for the instance of \Cref{fig:main} and $\eta=0.6$. The panels indicate the existence of periodic points of least period $k=5$ and $k=7$, but not of $k=3$ (the period $k$ is denoted in the legends). According to \Cref{thm:sharkovsky}, this means that for these parameters, the \eqref{eq:mev_dynamics} have periodic points of uncountably many periods but are not provably chaotic. Similarly, we can create instances with periods of $k=4,2$ and $1$ but not of $k=8$ etc.} 
    \label{fig:periodic}
\end{figure}


Moreover, in \Cref{fig:bifurcation_range_valuations}, we visualize the long term behavior of \eqref{eq:mev_dynamics} with respect to the range of tolerance distributions. 

\begin{wrapfigure}[14]{r}{0.55\linewidth}
    \centering
    \vspace{-0.94cm}
    \includegraphics[scale=0.3]{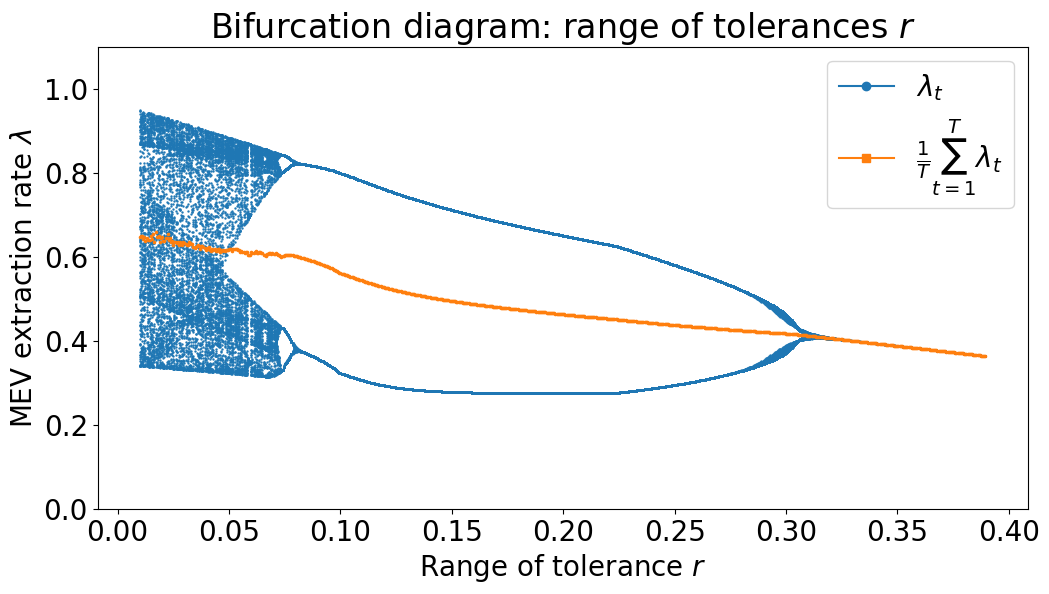}
    \caption{Bifurcation diagram for the \eqref{eq:mev_dynamics} with respect to the range of tolerance distributions. We observe here the route from chaos to order as the range of the tolerance increases.}
    \label{fig:bifurcation_range_valuations}
\end{wrapfigure}
In this case, we keep the adjustment quotient $\eta$ fixed. Nevertheless simulations with different distributions produce substantially the same outcome. Now, the horizontal axis corresponds to the varying parameter $r$, which is the range of the tolerance distributions. For this simulation, we use uniform distributions for the tolerances $F$ and $G$ on the interval $[L_u,U_u]= [0.5-r, 0.5+r]$ and $[L_m,U_m]= [0.4-r, 0.4+r]$, respectively. Intuitively, as the area between the probability of density distributions  $\Bar{F}'$ and $G'$ increases, the dynamics \eqref{eq:mev_dynamics} tend to stabilize. Interestingly, \Cref{fig:bifurcation_range_valuations}, serves also as a visualization of \Cref{thm:chaotic}, which states that for each adjustment quotient $\eta$, there are tolerance distributions $F,G$ such that the \eqref{eq:mev_dynamics} become chaotic. We note that, we use uniform distributions for simplicity, 

\section{Performance}
\label{sec:opt_performance}

In the previous section, we analyzed the long-term performance of \eqref{eq:mev_dynamics}. We studied three properties regarding the evolution of the system; conditions for convergence, conditions for not reaching the boundaries and conditions for chaotic behavior. Having established these, we now turn to the performance of the system in these regimes. As mentioned above, low values of the control parameter, $\eta$, allow the dynamics to stabilize at the fixed point $\la^*$. However, such values may lead to slow responses of the system against fast-changing market conditions and may be rarely applicable in practice. Moreover, the intensity of the updates is only relative to prevailing market conditions which may be subject to continuous changes.\par
Thus, the important question that we need to study is whether the system can exhibit desirable performance even when it is pushed beyond the stable regime, into the densely periodic or chaotic regimes. This is critical to ensure near-optimality, i.e., bounded deviations from target, long-term survival, i.e., market-liveness, and ultimately, persistence of the system under possibly adverse conditions. Interestingly, our analysis will not only establish such desirable behavior in out of equilibrium conditions, but also provide formal guarantess (rigorous proofs) that this will be the case. \par
\subsection{Bounded deviations and average-case performance}
Remember that, based on Sharkovsky's Theorem (\Cref{thm:sharkovsky}) the further left a period point of $m$ is in the Sharkovsky's ordering (\Cref{def:sharkovsky}), the denser the periodic orbits are. Thus, to measure the performance of this system, we study if the target ratio $w$ is achieved on average, or equivalently the time-average behavior of the values of the target function, i.e., $\lim\limits_{T\to\infty}\frac{1}{T}\sum_{t=1}^T\Delta(\la)$.\par
In \Cref{fig:performance}, we visualize the values of $\Delta(\la_t)$ for the instance of \Cref{fig:main}. As the control parameter $\eta$ increases, the average deviations increase. This trend (deviation from average optimal performance) continues until the dynamics hit the boundary leading to non-liveness. The simulations suggest that optimal values of the control parameter $\eta$, as far as achieving the target $w$ on average is concerned, correspond to regimes with chaotic or complex - periodic with dense periods - evolution of the MEV extraction rate. From a practical perspective, such regimes ensure \emph{market-liveness} along with bounded deviations from the target (see \Cref{lem:attracting}).\par
\begin{figure*}[t]
    \centering
    \includegraphics[width=0.95\linewidth]{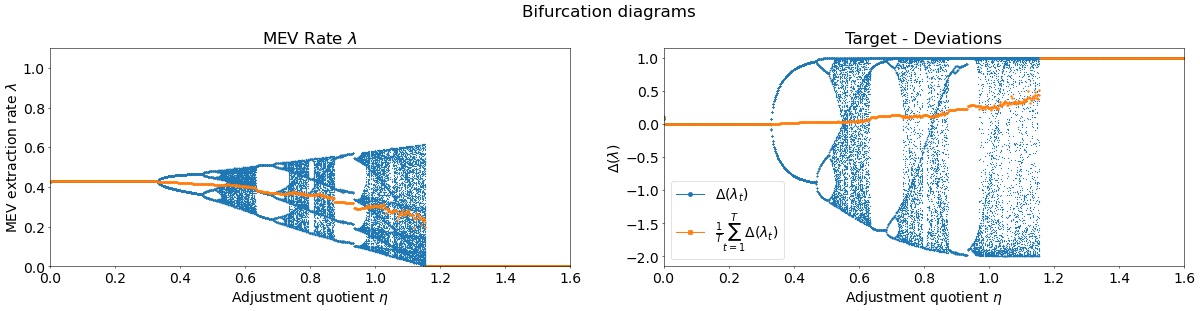}
    \caption{Performance of the \eqref{eq:mev_dynamics} in the instance of \Cref{fig:main}, cf. left panels in both Figures. The simulations use a burn-in period of $200$ iterations followed by $T=200$ iterations that are plotted in the bifurcation diagrams. The right panel shows the values of the target, $\Delta(\la_t)$, for each $t=201,\dots, 400$ in blue dots and the averages over this period in light-colored dots (see also the legend). As $\eta$ grows, the deviations from the target also grow till the dynamics hit the boundary and get absorbed there.}
    \label{fig:performance}
\end{figure*}
In the remainder of this section, we quantify these empirical results regarding system performance. Our first observation is that the sequence of MEV extraction rates, $(\la_t)_{t\ge0}$, remains within a bounded neighborhood of $\la^*$, which depending on the size of $\eta$, can be non-trivial (i.e., strictly between $0$ and $1$) even in the chaotic or densely-periodic cases. This is formally stated in \Cref{lem:attracting}.
 
\begin{lemma}[Attracting range of $\lambda$.]\label{lem:attracting} Consider an instance of the \eqref{eq:mev_dynamics} with $\eta, w>0$ and tolerance distributions $F,G$ with $\lambda^* \in (0,1)$ such that $\Delta(\lambda^*)=0$ as usual. Then, for the sequence of MEV extraction rates $(\lambda_t)_{t\ge0}$ generated by the \eqref{eq:mev_dynamics} it holds that 
\[\max{\left\{0,\lambda^*-\frac{w\eta}{4}\right\}}\le \lambda_t \le \min{\left\{\lambda^*+\frac{\eta}{4},1\right\}}~\text{for all~}t\geq0.\]
\end{lemma}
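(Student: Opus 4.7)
The plan is to establish forward-invariance of the interval
\[
I \;:=\; \Bigl[\max\bigl\{0,\lambda^{*}-\tfrac{w\eta}{4}\bigr\},\ \min\bigl\{\lambda^{*}+\tfrac{\eta}{4},1\bigr\}\Bigr]
\]
under the one-step update $h(\lambda)=\lambda+\eta\lambda(1-\lambda)\Delta_{w}(\lambda)$, after which the bound propagates by induction on $t$. Two elementary observations drive everything: the AM--GM inequality $\lambda(1-\lambda)\le 1/4$ on $[0,1]$, and the range bounds $-w=\Delta_{w}(1)\le \Delta_{w}(\lambda)\le \Delta_{w}(0)=1$, which follow directly from the monotonicity part of Property~\ref{pro:property}(i). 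Combined, these give the step-size estimate $-w\eta/4\le \eta\lambda(1-\lambda)\Delta_{w}(\lambda)\le \eta/4$, with the lower extreme attained only when $\Delta_{w}(\lambda)<0$ and the upper extreme only when $\Delta_{w}(\lambda)>0$.

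The main step is a case split on the sign of $\lambda_{t}-\lambda^{*}$, combined with the directional-stability conclusion of \Cref{lem:stable}. If $\lambda_{t}\le \lambda^{*}$, then $\Delta_{w}(\lambda_{t})\ge 0$, so $\lambda_{t+1}\ge \lambda_{t}$, preserving the lower bound via the induction hypothesis; for the upper bound, I would chain $\lambda_{t+1}\le \lambda_{t}+\eta/4\le \lambda^{*}+\eta/4$ using the step-size estimate together with $\lambda_{t}\le\lambda^{*}$. Symmetrically, if $\lambda_{t}>\lambda^{*}$, directional stability forces $\lambda_{t+1}<\lambda_{t}\le \lambda^{*}+\eta/4$, while the step-size estimate gives $\lambda_{t+1}\ge \lambda_{t}-w\eta/4\ge \lambda^{*}-w\eta/4$. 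In both subcases, intersecting with the trivial containment $\lambda_{t+1}\in[0,1]$ yields $\lambda_{t+1}\in I$. This observation is exactly what is needed: the constants $\eta/4$ and $w\eta/4$ arise naturally as the product of the step-size factor $1/4$ with the two range endpoints $1$ and $-w$ of $\Delta_{w}$.

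The only subtlety I anticipate is the base case $t=0$, since the bound as stated must hold from $t=0$ onwards, yet a generic starting point $\lambda_{0}\in(0,1)$ need not lie in $I$ when both $\min$ and $\max$ are nontrivial (that is, when $\eta<4\lambda^{*}/w$ and $\eta<4(1-\lambda^{*})$). The cleanest resolution is to observe that whenever $\lambda_{t}$ lies outside $I$ but above $\lambda^{*}+\eta/4$, the step-size estimate gives $\lambda_{t+1}\ge \lambda_{t}-w\eta/4$ and directional stability gives $\lambda_{t+1}<\lambda_{t}$, so successive iterates strictly decrease while never overshooting $\lambda^{*}$ by more than $w\eta/4$; symmetrically from below. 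Hence the trajectory enters $I$ in finite time and, by the invariance proved above, never leaves. The statement of the lemma may then be read either as an assertion about initial conditions $\lambda_{0}\in I$ (the natural regime of interest, since $\lambda^{*}$ is the target) or, when the $\min$/$\max$ saturate at $0$ and $1$, as the trivial containment $\lambda_{t}\in[0,1]$. In all readings the substantive content is the invariance argument, which is the step I expect to carry the proof.
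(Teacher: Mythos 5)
Your proof is correct and follows essentially the same route as the paper's: the bounds come from $\lambda(1-\lambda)\le 1/4$ together with $-w\le\Delta_w(\lambda)\le 1$, applied on the appropriate side of $\lambda^*$ via the directional stability of \Cref{lem:stable}. If anything you are more careful than the paper, whose proof only establishes the one-step estimates and is silent on the base case; your observation that the stated ``for all $t\ge 0$'' cannot hold for arbitrary $\lambda_0\in(0,1)$ when the $\min$/$\max$ are nontrivial, and your finite-time-absorption argument (which just needs the extra remark that a monotone trajectory trapped above $\lambda^*$ must converge to a fixed point, hence to $\lambda^*$), address a real gap in the lemma as written.
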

\begin{proof}
    By \Cref{lem:stable}, we know that $h(\la)>\la$ only if $\la<\la^*$. Thus, $h(\la)= \la+\eta\la(1-\la)\Delta(\la)<\la^*+\eta\cdot \frac14\cdot 1=\la^*+\frac{\eta}{4}$,
where we used that $\Delta(\la)\le 1$ for all $\la\in[0,1]$. Similarly, to determine the lower bound, we need to consider $\la>\la^*$. In this case, we have that 
\begin{align*}
    h(\la)&=\la+\eta\la(1-\la)\Delta(\la)
        >\la^*+\eta\la (1-\la)\cdot(-w)
        >\la^*-\frac{\eta w}{4},
\end{align*}
where we used that $\Delta(\la)\ge-w$ for all $\la\in [0,1]$. 
\end{proof}

Note that the conditions $\la^*-\frac{w\eta}{4}>0$ whenever $\eta<\frac{4\la^*}{w}$ and $\la^*+\frac{\eta}{4}<1$ whenever $\eta<4(1-\la^*)$. By \Cref{lem:attracting}, these inequalities offer an alternative set of market-liveness conditions to those of  \Cref{thm:market_liveness}. \par
The result of \Cref{lem:attracting} is practically relevant when $\eta$ is small enough, so that both comparisons are resolved with the non-trivial bounds, rather than with $0$ and $1$. For such values of $\eta$, the sequence $\la_t$ remains bounded between $\la^*-\frac{w\eta}{4}$ and $\la^*+\frac{\eta}{4}$. This allows us to bound the possible values of $\Delta(\la_t)$, i.e., the possible deviations from the target $\Delta(\la^*)=0$. To do so, we proceed with a concrete expression for function $\Delta(\la)$ that uses the class of beta distributions for the tolerances of both miners and users.\footnote{As mentioned above, beta distributions have the desirable properties that (1) their support range is $[0,1]$, matching the values of possible MEV extraction tolerance for miners and users, and (2) they have the expressive capacity to capture many different forms of valuation distributions.} The result regarding performance is provided in \Cref{thm:performance}.

\begin{theorem}[Bounded Deviations]\label{thm:performance}
    Consider an instance of the \eqref{eq:mev_dynamics} with $F\sim \text{Beta}(a,b), G\sim \text{Beta}(c,d)$ and $\eta,w$ such that $0 < \la^*-\frac{w\eta}{4}<\la^*+\frac{\eta}{4} < 1$. Then, we have
    \[|\Delta(\la_t)|\le \frac{\eta(1+w)}{4}\left(\max_{\la\in (p,q)}{\{f(\la)\}}+w\max_{\la\in (p,q)}{\{g(\la)\}}\right),\]
where $f,g$ denote the probability density functions of $F,G$ and $p=\la^*-\frac{w\eta}{4}, q=\la^*+\frac{\eta}{4}$.
\end{theorem}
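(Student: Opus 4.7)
The plan is to combine the attracting-range result of the preceding lemma with a first-order estimate of $\Delta$ around $\la^\ast$, using the Mean Value Theorem. First I would invoke Lemma (Attracting range of $\la$) to conclude that, under the stated condition $0 < \la^\ast - \tfrac{w\eta}{4} < \la^\ast + \tfrac{\eta}{4} < 1$, the entire trajectory $(\la_t)_{t\ge0}$ stays inside the interval $[p,q]$ with $p=\la^\ast-\tfrac{w\eta}{4}$ and $q=\la^\ast+\tfrac{\eta}{4}$. In particular, $\la^\ast\in[p,q]$ as well, so for every $t$ the pair $\la_t,\la^\ast$ lies in this interval of length $q-p=\tfrac{\eta(1+w)}{4}$.

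Next I would exploit the differentiability of $F$ and $G$ (automatic for Beta densities), which yields $\Delta'(\la)=-f(\la)-wg(\la)$ and thus $\Delta\in C^1(0,1)$. Using $\Delta(\la^\ast)=0$ from Property 1, the Mean Value Theorem gives some $\xi$ strictly between $\la_t$ and $\la^\ast$ (hence $\xi\in(p,q)$) with
\[
\Delta(\la_t)=\Delta(\la_t)-\Delta(\la^\ast)=\Delta'(\xi)\,(\la_t-\la^\ast).
\]
Taking absolute values and bounding each factor separately, I would use $|\la_t-\la^\ast|\le q-p=\tfrac{\eta(1+w)}{4}$ and
\[
|\Delta'(\xi)|\le f(\xi)+wg(\xi)\le \max_{\la\in(p,q)}f(\la)+w\max_{\la\in(p,q)}g(\la),
\]
where the maxima are attained (or bounded) because Beta densities are continuous on the compact subinterval $[p,q]\subset(0,1)$. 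Multiplying the two estimates gives exactly the claimed bound
\[
|\Delta(\la_t)|\le \frac{\eta(1+w)}{4}\Bigl(\max_{\la\in(p,q)}f(\la)+w\max_{\la\in(p,q)}g(\la)\Bigr).
\]

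There is no real obstacle: the proof is essentially a two-line MVT estimate once the attracting-range lemma is in place. The only subtle point worth flagging is the role of the Beta-distribution hypothesis, which is used only to guarantee continuous, bounded densities on the compact interval $[p,q]$, so that the suprema in the bound are finite and attained; the argument extends verbatim to any pair of distributions with continuous densities on $(0,1)$. The condition $0<p<q<1$ is essential, as it ensures we stay strictly away from the possible singularities of Beta densities at $0$ and $1$ and, simultaneously, identifies the relevant regime (market-liveness, as noted after the attracting-range lemma) in which the bound is non-vacuous.
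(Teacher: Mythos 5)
Your proof is correct and follows essentially the same route as the paper's: invoke the attracting-range lemma, apply the Mean Value Theorem with $\Delta(\la^*)=0$, and bound $|\Delta'|=f+wg$ by the maxima of the Beta densities on $(p,q)$. The only (immaterial) difference is that you apply the MVT once, directly between $\la_t$ and $\la^*$, whereas the paper first sandwiches $\Delta(\la_t)$ between $\Delta(p)$ and $\Delta(q)$ via monotonicity of $\Delta$ and then applies the MVT at each endpoint; both yield the identical bound.
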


The next Lemma offers an exhaustive case analysis for the maximization problems that are involved in \Cref{thm:performance}. Our intention is to use this lemma for values of $p=\la^*-\frac{w\eta}{4}$ and $q =\lambda^* + \frac{\eta}{4}$, to find analytically the maxima of the probability density functions $f$ and $g$ of the beta distributions as used in \Cref{thm:performance}. The proof is provided in the \Cref{appendix}.

\begin{lemma}[Maxima of Beta distributions]\label{lem:beta}
    Let $p,q \in (0,1)$ and $p < q$. Suppose that we are given a function $f : [0,1] \rightarrow {\mathcal{R}}$, such that $f(x) = x^{a-1} \cdot (1-x)^{b-1}$, for some fixed constant values $a, b \in \mathcal{R}$, and for any $x \in [0,1]$. Then we have the following:
    Suppose first that $a+b \not = 2$.
    \begin{itemize}[leftmargin=*]
        \item If $2 > a+b$ and 
         \begin{itemize}[leftmargin=*]
             \item $\frac{1-a}{2-a-b} < p$, then $f$ has the maximum value in $[p,q]$ that is $f(q)$; i.e.~$\forall x \in [p,q] : f(x) \in (0,f(q)]$.
             \item $p \leq \frac{1-a}{2-a-b} \leq q$, then $f$ has the maximum value in $[p,q]$ that is $\tau = \max\{f(p), f(q)\}$; i.e.~$\forall x \in [p,q] : f(x) \in (0,\tau]$.
             \item $\frac{1-a}{2-a-b} > q$, then $f$ has the maximum value in $[p,q]$ that is $f(p)$; i.e.~$\forall x \in [p,q] : f(x) \in (0,f(p)]$.
         \end{itemize}
        \item If $2 < a+b$ and 
         \begin{itemize}[leftmargin=*]
             \item $\frac{1-a}{2-a-b} < p$, then $f$ has the maximum value in $[p,q]$ that is $f(p)$; i.e.~$\forall x \in [p,q] : f(x) \in (0,f(p)]$.
             \item $p \leq \frac{1-a}{2-a-b} \leq q$, then $f$ has the maximum value in $[p,q]$ that is $\gamma = f\left(\frac{1-a}{2-a-b}\right)$; i.e.~$\forall x \in [p,q] : f(x) \in (0,\gamma]$.
             \item $\frac{1-a}{2-a-b} > q$, then $f$ has the maximum value in $[p,q]$ that is $f(q)$; i.e.~$\forall x \in [p,q] : f(x) \in (0,f(q)]$.
         \end{itemize}
    \end{itemize}

    If $a+b = 2$, then under $a>1$, function $f$ reaches it maximum at $x=q$, i.e., $\forall x \in [p,q] : f(x) \in (0,f(q)]$.

    If $a+b = 2$ and $a<1$, then function $f$ reaches it maximum at $x=p$, i.e., $\forall x \in [p,q] : f(x) \in (0,f(p)]$. 
    
    Finally, if $a+b=2$ and $a=1$, then $f(x) = 1$ for all $x \in [p,q]$.
\end{lemma}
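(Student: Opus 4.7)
The lemma is a routine monotonicity analysis of the kernel $f(x)=x^{a-1}(1-x)^{b-1}$ on the subinterval $[p,q]\subset (0,1)$, so my plan is to (i) compute $f'$, (ii) identify the unique interior critical point, (iii) classify it as a maximum or minimum depending on the sign of $a+b-2$, and (iv) case-split on where this critical point lies relative to $[p,q]$. The correspondence with the lemma's case distinctions will then be read off mechanically.

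\textbf{Step 1: derivative and critical point.} For $x\in(0,1)$ I would differentiate using the product rule to obtain
\[
f'(x)=x^{a-2}(1-x)^{b-2}\bigl[(a-1)(1-x)-(b-1)x\bigr]=x^{a-2}(1-x)^{b-2}\bigl[(a-1)-(a+b-2)x\bigr].
\]
The prefactor $x^{a-2}(1-x)^{b-2}$ is strictly positive on $(0,1)$, so the sign of $f'(x)$ coincides with the sign of the linear factor $L(x):=(a-1)-(a+b-2)x$. Assuming $a+b\neq 2$, the unique zero of $L$ is $x^*=\tfrac{a-1}{a+b-2}=\tfrac{1-a}{2-a-b}$, which matches the threshold appearing in the lemma.

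\textbf{Step 2: classifying $x^*$ via the sign of $a+b-2$.} The key (and only delicate) point is the sign flip that occurs when dividing the inequality $L(x)>0$ by $a+b-2$. If $a+b>2$, then $L$ is strictly decreasing in $x$, so $f'>0$ on $(0,x^*)$ and $f'<0$ on $(x^*,1)$, making $x^*$ an interior maximum of $f$. If $a+b<2$, then $L$ is strictly increasing, so $f'<0$ on $(0,x^*)$ and $f'>0$ on $(x^*,1)$, making $x^*$ an interior minimum. I will then split each of these two regimes into three subcases according to whether $x^*<p$, $p\le x^*\le q$, or $x^*>q$. In the ``maximum'' regime ($a+b>2$), $f$ is monotone on $[p,q]$ when $x^*\notin[p,q]$ and the max is the closer endpoint, while if $x^*\in[p,q]$ the max is the interior value $f(x^*)$; in the ``minimum'' regime ($a+b<2$), $f$ is monotone on $[p,q]$ when $x^*\notin[p,q]$ and the max is again an endpoint, but if $x^*\in[p,q]$ the max is $\max\{f(p),f(q)\}$ since the minimum sits in between. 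Matching each of these six subcases with the six bullets of the lemma statement is a direct verification.

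\textbf{Step 3: the degenerate case $a+b=2$.} Here $L(x)=a-1$ is constant, so $f'$ has constant sign on $(0,1)$: if $a>1$ then $f$ is strictly increasing and attains its maximum on $[p,q]$ at $q$; if $a<1$ then $f$ is strictly decreasing and the maximum is at $p$; if $a=1$ then $b=1$ and $f\equiv 1$.

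\textbf{Anticipated obstacle.} There is no substantive mathematical obstacle; the only thing to watch carefully is the inequality flip when $a+b-2<0$, which is precisely what produces the asymmetry between the two halves of the lemma (maximum attained in the interior only in the $a+b>2$ branch, and endpoint-versus-endpoint ambiguity arising only in the $a+b<2$ branch). Presenting the case analysis cleanly, rather than proving anything hard, is the main expository task.
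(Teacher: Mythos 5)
Your proposal is correct and follows essentially the same route as the paper's proof: compute $f'(x)=x^{a-2}(1-x)^{b-2}\bigl[(a-1)+(2-a-b)x\bigr]$, note the prefactor is positive on $(0,1)$, locate the critical point $\frac{1-a}{2-a-b}$, classify it as a minimum or maximum according to the sign of $a+b-2$, and case-split on its position relative to $[p,q]$, with the constant-sign treatment of $a+b=2$ handled identically. No gaps.
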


\section{Robustness of Findings}
\label{app:appendix}

For the previous rigorous mathematical analysis, and in particular, for the derivation of Li-Yorke chaos, we considered the deterministic evolution of the system which resolves randomness through its expectation. Our extensive simulations, presented next, showcase that the behaviour of the system continues to match the behaviour proved above, even when we lift such assumptions (which are only necessary for the mathematical proofs), i.e., when we perturb the update rule or allow for variable adjustment parameters, target ratio and tolerance distributions.

\subsection{Variations in the update rules}
\label{subsec:expressiveness}
We start by demonstrating that the behavior of the dynamics \eqref{eq:mev_dynamics} is expressive enough, to capture the evolution of a family of update rules. Firstly, we recall that \eqref{eq:mev_dynamics} have three fixed points: $\la=0$ and $\la=1$, which correspond to the degenerate states of the market, and the interior fixed point $\la^*$. However, when the dynamics \eqref{eq:mev_dynamics} tends to the boundaries, i.e., $\la \to 0$, or $\la \to 1$, the updates are slow because of the factors $\la$ (or $(1-\la)$, respectively) in the update rule \eqref{eq:update}. However, the designer would want the updates in one or both of the boundaries to be more aggressive in order  to quickly restore the desired balance between users and miners. Next, we show, that the following update rules exhibit similar behavior compared to the dynamic evolution of \eqref{eq:update}.
\begin{figure}[t]
\centering
    \begin{subfigure}{.475\textwidth}
        \includegraphics[width=\textwidth]{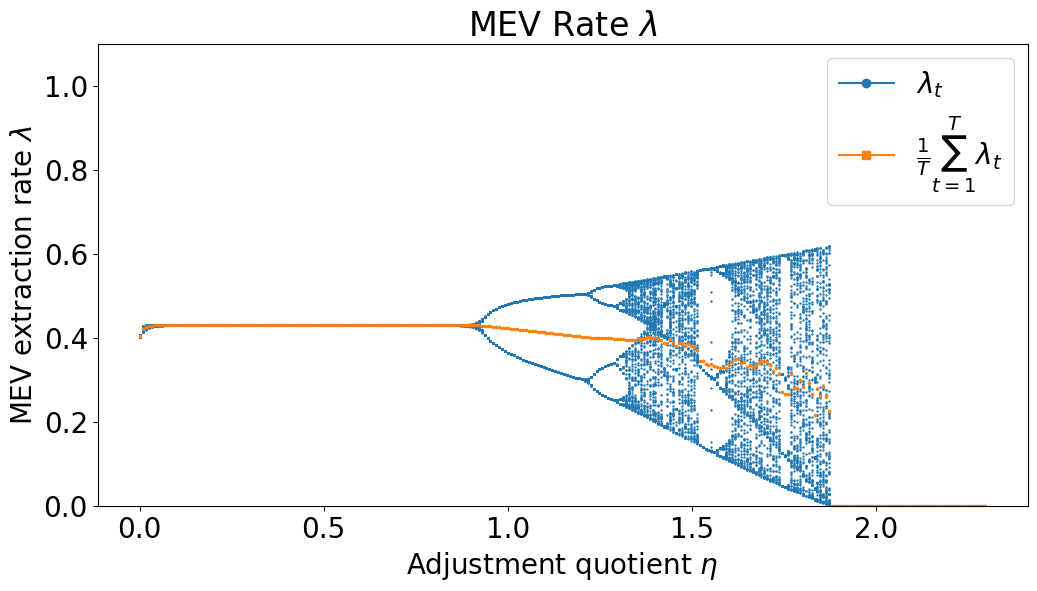}
        \caption{$h(\la) = \la + \eta\la (1-\la) \Delta_w(\la)$}
        \label{subfig:h}
    \end{subfigure}
\hspace{5pt}
    \begin{subfigure}{.475\textwidth}
        \centering
        \includegraphics[width=\textwidth]{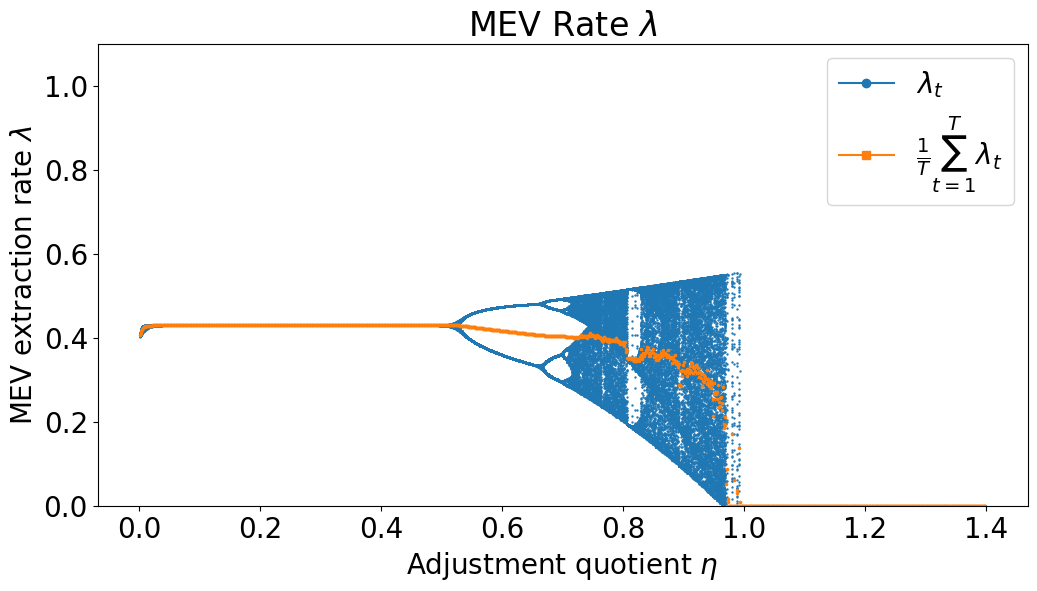}
        \caption{$h_1(\la) = \la + \eta \la \Delta_w(\la)$}
        \label{subfig:h1}
    \end{subfigure}
    \vskip\baselineskip
    \begin{subfigure}{.475\textwidth}
        \centering
        \includegraphics[width=\textwidth]{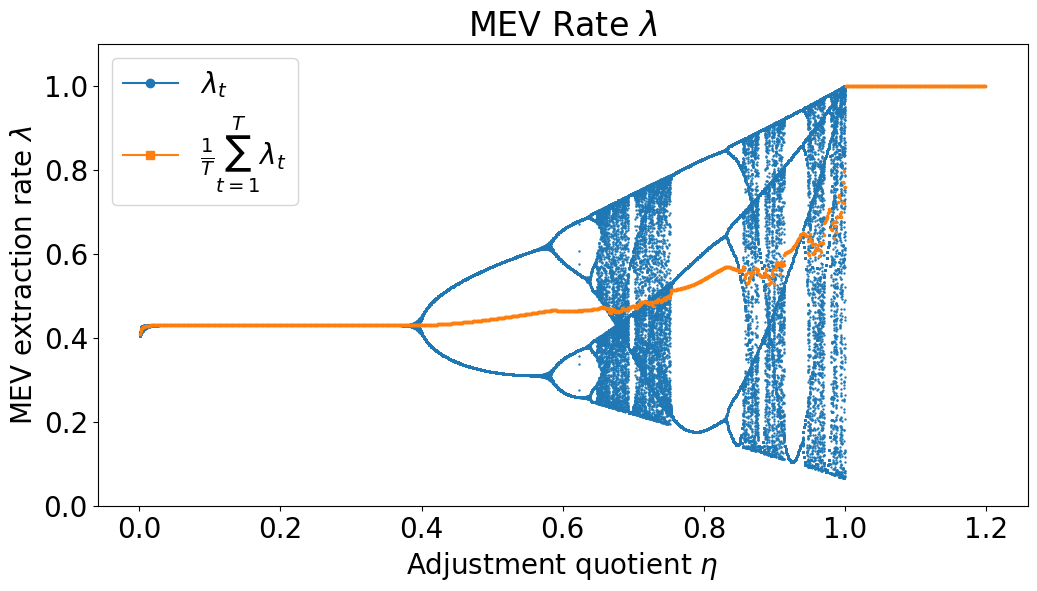}
        \caption{$h_2(\la) = \la + \eta (1-\la) \Delta_w(\la)$}
        \label{subfig:h2}
    \end{subfigure}
\hspace{5pt}
    \begin{subfigure}{.475\textwidth}
        \centering
        \includegraphics[width=\textwidth]{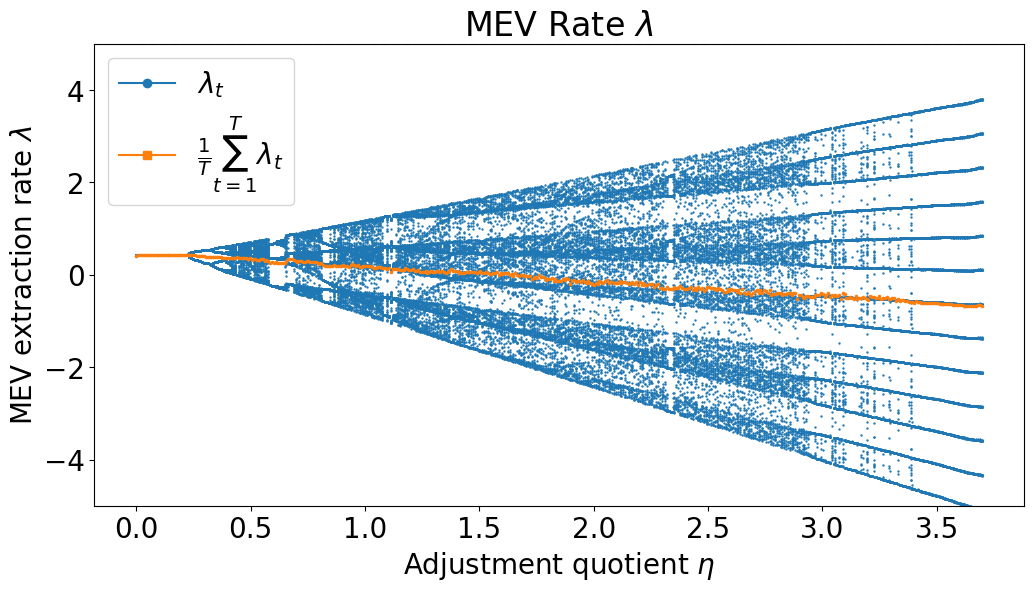}
        \caption{$h_3(\la) = \la + \eta \Delta_w(\la)$}
        \label{subfig:h3}
    \end{subfigure}
    \caption{In all the above simulations, we use Normal distributions for both the tolerance of users and miners on the extracted MEV rate. Specifically $F \sim \mathcal{N}(\mu = 0.4, \sigma^2=0.01)$ and $G \sim \mathcal{N}(\mu = 0.5, \sigma^2=0.01)$. In our toy example, the targeted ratio between users and miners participation is $w=1.6$. 
    }
    \label{fig:different_updates}
\end{figure}

The purpose of the above simulations is to figure out if by removing the fixed points $\la=0$ and $\la=1$ and allowing $\la$ to take arbitrary value, the dynamics maintain a similar behavior. In order to do so, for all the update rules $h, h_1, h_2, h_3$, the interior fixed point $\la^* \in (0,1)$ must remain directionally stable. To achieve that, for the update rule $h_1(\la) = \la + \eta \la \Delta_w(\la)$, for each $\la$, it should hold $\la \ge 0$; otherwise, if $\la<0$, we project this value to $0$. In order to check that this is mandatory, suppose that we allow negative value and for some time $t$, $\la_t<0$. This would imply that $h_1(\la_{t+1})<\la_t<0$. For that extraction rate, it holds $\la_t< \la^*$, and thus the dynamics become unstable since  $h_1(\la_t)-\la_t/\la_t - \la^* >0$. Similarly, for the update rule $h_2$, in order to maintain the directional stability of the interior fixed point $\la^*$ the updates need to be projected to $(- \infty,1]$. For the update rule $h_3(\la)$, in order to maintain the directional stability of $\la^*$, we do not have to restrict the values of $\la$, because it actually is directionally stable, see \Cref{subfig:h3}. \par

Moreover, by comparing the dynamic evolution of \eqref{eq:mev_dynamics}  with the dynamic evolution of the update rules $h_1, h_2, h_3$ in \Cref{fig:different_updates}, we observe that, in the latter case, the dynamics collapse faster to the boundaries $\la=0$ or $\la=1$ since the updates are more aggressive. That is because, for $\la \in (0,1)$, it holds that $\la (1-\la)< \la $ and $\la (1-\la)< 1-\la$. Moreover, for each of those update rules, we can prove similar bounds for the respective \textit{Market Liveness}, see \Cref{thm:market_liveness} about the convergence to $\la^*$, and \Cref{thm:convergence} about the chaotic updates, and \Cref{thm:chaotic}. That is, we can also find the values of the parameters for which analogous theorems hold for the systems $h_1, h_2$ and $h_3$. We also remind that the former theorems do not rely on the tolerance distributions $F,G$ and the results hold regardless of the exact distribution.

\subsection{MEV-burn}
\label{sub:burn}
An important variation of the above mechanism corresponds to the case in which some part of the generated MEV, $m_t$ is burned, i.e., not shared between users and block producers. To formalise this, we introduce a \emph{burn} parameter, $k\in[0,1]$, which determines the fraction of MEV allocated to block producers. This changes the \eqref{eq:mev_dynamics} as follows:
\begin{equation}
    \lambda_{t+1}=\lambda_t+\eta m_t\lambda_t(1-\lambda_t)\left(\bar{F}(\lambda_t)-wG(k\lambda_t)\right)\tag{MEV-burn}.
\end{equation}
In this case, users receive $(1-\lambda_t)$ fraction of the generated MEV, whereas block producers only receive only $k\lambda_t$ fraction of the generated MEV, with the rest, $(1-k)\lambda_t$, being burned.\par
Interestingly, the observed behaviour of the dynamic mechanism continues to hold also in this case. In our simulations (not all presented here), we consider several scenarios in which burning can be constant or even variable over time, e.g., to reflect changes in transaction fees. In \Cref{fig:burn}, we present one such simulation in which the burning parameter, $k$, i.e., the fraction that builders actually retain from the allocated MEV, is randomly sampled in $[0.8,0.95]$ at every iteration. As we see, the introduction of a random burn parameter increases the chaotic behaviour of the mechanism (possibly eliminating the periodic regimes) but qualitatively the dynamics remain equivalent. If $k$ is deterministic, then the main Theorems above can be recovered formally as the dynamics remain directionally stable (not shown here).

\begin{figure}[t]
    \centering
    \includegraphics[width=0.96\linewidth]{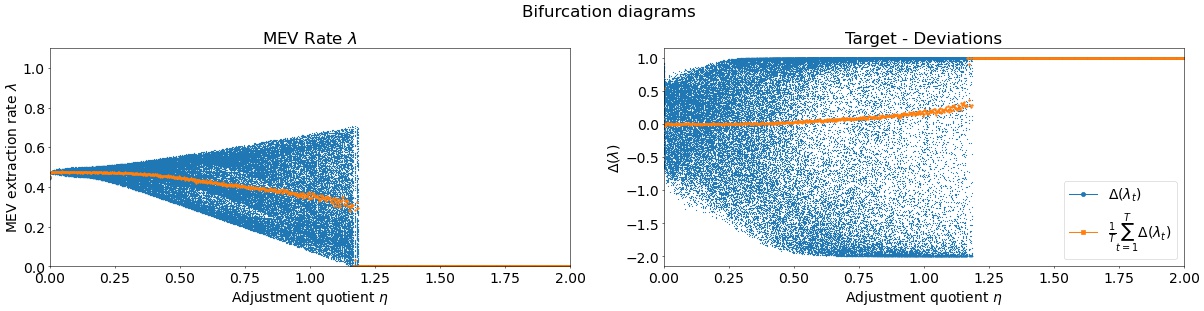}
    \caption{Evolution of the \eqref{eq:mev_dynamics} when a portion of the MEV is burnded. In this simulation, block producers only receive $k$ of their allocated share, where $k$ is parameter that is randomly drawn at each iteration from $[0.8, 0.95]$. Qualitatively equivalent results obtain for different ranges or deterministic values of $k$.}
    \label{fig:burn}
\end{figure}

\subsection{Stress testing under volatile market conditions}
\label{sec:robustness_app}
In this section, we provide further simulations to test the robustness of our model. We remind that in the theoretical part, for our analysis, we considered stationary tolerance distributions for users and miners. However, a shift in the prevailing market conditions, dynamics, or dominant trends is a well studied phenomenon in finance and it is called regime changes \cite{Ang12}. In financial markets, a regime change refers to a transition from one market regime to another, characterized by distinct behavior and patterns. Regime changes can have a profound impact on various aspects of financial markets, including asset prices, volatility, investor sentiment, and trading strategies.\par

\begin{figure*}[t]
    \centering
    \includegraphics[width=0.96\linewidth]{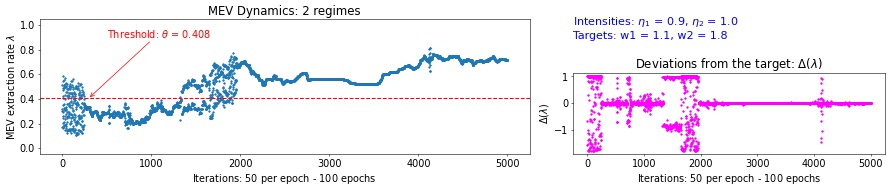}
    \caption{In this scenario, we consider a two-regime market with a threshold $\theta=0.408$ on the extraction rate governing the transition. We handle two control parameters based on the current extraction rate. The left panel shows the evolution of \eqref{eq:mev_dynamics} for 5,000 iterations, which is divided in groups of 100 epochs. The right panel shows the deviations from the target $\Delta(\la)$ during those 100 epochs.}
    \label{fig:regimes}
\end{figure*}

\paragraph*{Regime changes}
In \Cref{fig:regimes}, we examine a realistic scenario with two market regimes. We envision an environment where there are two pairs of tolerance distributions $F_1, G_1$ and $F_2, G_2$. The transition between these two regimes is governed by a threshold of the MEV extraction rate (set at $\theta=0.408$ in the experiments). When the MEV extraction rate is below (above) $\theta$ users and miners are in regime 1 (2). Building on this idea, we also adopt two different intensity values, $\eta_1, \eta_2$, for the updates based on the MEV extraction rate, and two target values $w_1$ and $w_2$. It is known that MEV extracted differs from one application to another. For instance, the MEV generated in decentralized exchanges is regular, whereas liquidations and NFT generate spikes of MEV. Thus, when a spike in MEV is observed, the updates must be more aggressive in order to restore the equilibrium or return to the bounded regions faster. 
In this scenario, we observe the evolution of \eqref{eq:mev_dynamics} for 5,000 iterations. Initially the control parameter is set at $\eta_1=0.9$ and the target at $w_1=1.1$. If at a given time $t$ the extraction rate becomes larger than the threshold, i.e., $\la_t > \theta$, this triggers a change in the control parameter to $\eta_2=1.0$ to deal with the new regime. Vice versa, when the \eqref{eq:mev_dynamics} undershoots the threshold, the control parameter sets back to $\eta_1$. 

We observe here that the extraction rate stabilizes and in fact it stabilizes within one of the two regimes -- see \Cref{fig:regimes} (left). We also see quite good stabilisation of the target values in the right part of the figure. This experiment suggests that our model shows reasonable stability even when the market conditions change between two regimes.   

\begin{figure*}[pt]
    \centering
    \includegraphics[width=0.96\linewidth]{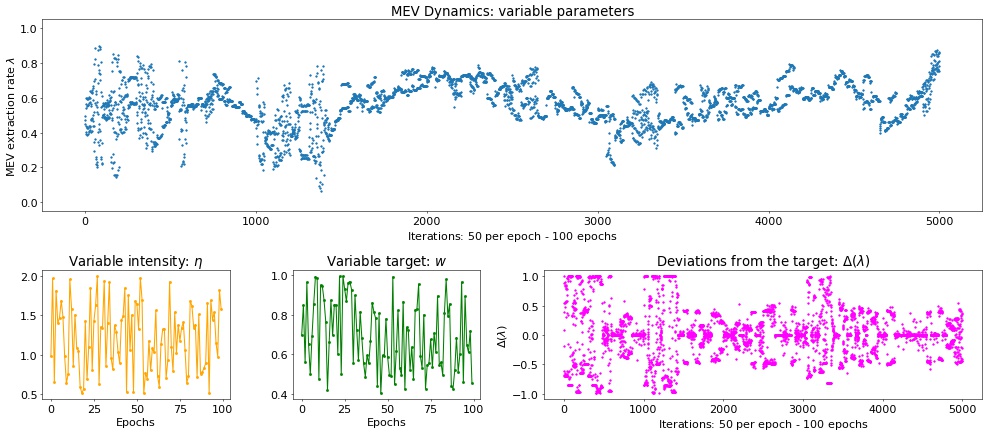}
    \caption{In this scenario, we consider an adversarial setting where the intensity of the updates and the target is randomly sampled in each epoch. The upper panel shows the evolution of \eqref{eq:mev_dynamics} in 5000 iterations, that is divided in 100 epochs. The bottom left panel shows the control parameters randomly drawn over the 100 epochs. The middle panel shows the target values drawn over the 100 epochs. The right panel shows the deviations $\Delta(\la)$ from the target over the 5000 iterations}
    \label{fig:variable}
\end{figure*}

\paragraph*{Unstable market conditions}
In the second scenario, \Cref{fig:variable}, we consider an adversarial scenario akin to stress tests used by regulators and financial institutions in classical financial markets. In this setting, we split the evolution of \eqref{eq:mev_dynamics} in 100 epochs, where each epoch consists of 50 consecutive blocks. In each epoch, we sample randomly a control parameter $\eta$ from the interval $[0.4,1]$ and target $w$ from the interval $[0.5,2]$. This simulates very drastic changes in the parameter $\eta$ and $w$ -- see the yellow and green graphs in \Cref{fig:variable}. Moreover, every 10 blocks, we also perturb the tolerance (beta) distributions $F,G$ by changing their respective parameters $a_u,b_u$ and $a_m,b_m$. This perturbation emulates the (high) volatility in demand that may characterise financial markets in certain periods.\par
It is worth noting, that changes in $\eta$ can be also interpreted as reflecting changes in the generated MEV. As mentioned above, if the generated MEV in period $t$ is $m_t$, then the dynamics become 
 \[\lambda_{t+1}=\lambda_t+\underbrace{\eta m_t}_{:=\eta_t}\lambda(1-\lambda_t)\left(\bar{F}(\lambda_t)-wG(\lambda_t)\right).\]
This generates a sequence $\eta_t:=(\eta m_t)_{t\in \mathbb{N}}$ of adjustment parameters which effectively encodes the changing value of MEV at each update period (block or epoch).\par
Even in this extreme stress test scenario, the results support our theoretical findings. The market remains in a safe zone, in the sense that the system keeps fluctuating and is bounded around the respective target -- see the top blue chart in \Cref{fig:variable}. As it is shown in \Cref{lem:attracting}, the bounded area expands as the control parameter $\eta$ increases. This is evident also in this simulation if we compare the evolution of \eqref{eq:mev_dynamics} with the trajectory of the random values drawn of the control parameter $\eta$. The evolution of the target value (bottom right pink diagram in \Cref{fig:variable}) exhibits a more challenging behavior, which on one hand shows how demanding our stress test is whilst on the other supports the need to investigate more dynamic environments. \par


\section{Discussion}
\label{sec:discussion}

The previous mathematical analysis and simulations demonstrate the properties and performance guarantees of dynamic MEV sharing through extraction rate oracles. In this section, we discuss the related proposals and implementations and also the economic considerations that emerge during the implementation of this mechanism.
\paragraph*{Connection to known protocols}
Our work is related to the recent proposal of MEV-burn \cite{Drake23burn}, which is an add-on to an enshrined PBS framework. We remind that burning is implicitly a redistribution to the token holders. In the aforementioned proposal, in each block, a committee of \textit{attesters} agree on a lower bound of the observed MEV, which acts as a reserve price for the builders' auction. The current proposer can only accepts bids higher than the agreed lower bound. Eventually, the reserve price of the bid is burnt and the rest is kept by the proposer. Thus, the relation of this proposal with our model is straightforward, in the sense that, in each block, the redistributed amount (to the token holders) is  
\begin{equation*}
\text{reserve price} = (1- \la) \cdot \text{winning bid}
\end{equation*}
In our model though, the redistributed amount is not solely defined by the observed MEV, but it is also governed by the feedback control \eqref{eq:mev_dynamics}. We conjecture that our mechanism further smooths out the MEV spikes, yielding security benefits. Such a comparative analysis is left for future work, since the main goal of this work is the analysis of the dynamic evolution of \eqref{eq:mev_dynamics}. \par

Our work is also closely related to the recent product of Flashbots, MEV-Share \cite{Flashbots_share}. In this protocol, users can customize the visibility of their transaction data to searchers, but most notably specify the distribution of the searchers payment \cite{Flashbots_share_docs_refund}. In particular, the user specifies the amount of the searcher's payment to be refunded (i.e, $1-\lambda$), while the rest of it (i.e., $\lambda$) goes to the validator. If not specified, the user will by default receive $90\%$ of the searcher's payment, and the remaining goes to the validator. The inherent tradeoff that user has to consider is that by keeping a larger refund, this may result in longer block inclusion times since it reduces the payment to the validator. Thus, the user is, implicitly, searching for the optimal refund that balances her utility and validator's payoff. In contrast, our proposed mechanism seeks to balance users' and block producers' utilities \textit{internally} and thus it act as a posted-price mechanism, meaning that it offers each user a take-it-or-leave-it refund. This reduces the cognitive burden for users and hence improves the user experience. We further elaborate on the user experience in the next paragraph.

\paragraph*{Incentives: user experience and market stability}
The main goal of dynamic MEV sharing through commonly known extraction rates is to reduce or eliminate uncertainty about the actual MEV extraction that users (respectively, block producers) will be facing (respectively, enjoying). The extraction rates maintained and broadcast by the mechanism provide an oracle to both users and block producers that they can consult in advance of participating in the specific market. This enhances both user experience, and market stability. The ideal extraction rate, which may dynamically change over time in response to changing market conditions, ensures that incentives are optimally balanced between users and block producer. In turn, this ensures sufficient transaction throughput on the one side and participation in the block-production process on the other side to safeguard the execution of transactions. \par

\paragraph*{Selecting system parameters}
The implementation of the dynamic MEV extraction rate mechanism requires the selection of a target ratio and adjustment quotient. To optimally select these parameters in practice, market designers can use a wide range of measurable signals, including historical data, competition with peers and target performance. However, as shown by our analysis, while the exact evolution of these systems is highly dependent on the selection of their parameters, their performance guarantees, i.e., market liveness and bounded deviations, are robust across a wide range of market conditions and mechanism designs.

\paragraph*{Strategic user behaviour}
The implementation of an MEV extraction oracle inevitably creates new vectors for strategic behavior among users and block producers. Both parties might attempt to influence the mechanism's future states through their current actions, such as abstaining from or delaying transaction submissions or executions to manipulate the extraction rate in their favor. While significant, the implicit effects of such strategic behaviors are complex to analyze due to interwined higher-order effects. For example, if many users delay their transactions during periods of high extraction rates, this can result in waiting costs (costs from delayed execution) and create congestion, leading to further delays and higher execution costs due to increased transaction fees. Similar considerations apply for block builders. Thus, although such behaviors are also possible in transaction fee mechanisms like EIP-1559, they have rarely been observed in practice, likely due to the complexity involved \cite{Rou21, Leonardos2022, Ba23}. Consequently, while static or dynamic attack vectors and strategic behavior can significantly impact any such mechanism, their analysis remains a separate and underexplored topic of its own interest in the literature.

\section{Conclusion}
\label{sec:conclusion}
Maximal Extractable Value (MEV) constitutes a critical issue in smart-contract blockchains and DeFi markets. With the evolution of the MEV ecosystem, the whole process of extracting excess value from the economic activity of users has been systematized and benefits different kind of blockchain participants including searchers, builders, validators and block proposers. This creates a fundamental trade-off between users' sovereignty over their own transactions and revenue maximization for the entities who secure and process these transactions. Accordingly, a lot of research is concentrated on designing mechanisms that will return to users part of the excess value created by their economic activity.\par
Our work establishes the first to our knowledge dynamic framework for the integral problem of MEV sharing. Without arguing on whether MEV is desirable or not, this framework treats MEV as an unavoidable phenomenon of current blockchains and enables the market to decide dynamically on its exact level through a protocol-based implementation. Using the design principles of the EIP-1559 transaction fee mechanism, our rigorous mathematical results provide the first formal evidence that similar mechanisms, with properly engineered target functions, can be successful in tackling broader blockchain optimization problems. From a practical perspective, our results suggest that dynamic MEV extraction can achieve targets set by the market designers and can, thus, inform the ongoing discussion about actively enshrining MEV extraction in DeFi protocols. 

\section*{Acknowledgments}
This research was supported in part by the National Research Foundation, Singapore and DSO National Laboratories under its AI Singapore Program (AISG Award No: AISG2-RP-2020-016), grant PIESGP-AI-2020-01, AME Programmatic Fund (Grant No.A20H6b0151) from A*STAR.


\bibliographystyle{alpha}
\bibliography{v2}

\appendix
\section{Omitted Proofs}
\label{appendix}

\begin{proof}[Proof of \Cref{thm:performance}]
Since $\Delta(\la_t)$ is decreasing, we have by \Cref{lem:attracting} that 
\[\Delta\left(\la^*-\frac{w\eta}{4}\right)\ge \Delta(\la_t)\ge \Delta\left(\la^*+\frac{\eta}{4}\right),\]
for all $\la_t$'s that lie in the attracting region of the dynamics, i.e., in $[\la^*-\frac{w\eta}{4},\la^*+\frac{\eta}{4}]$ provided that $\eta$ is such that $0<\la^*-\frac{w\eta}{4}, \la^*+\frac{\eta}{4}<1$. This implies that 
\begin{align*}
|\Delta(\la_t)| &\leq \max \left\{
 \left|\Delta\left(\la^*-\frac{w\eta}{4}\right)\right| , \left|
\Delta\left(\la^*+\frac{\eta}{4}\right)\right| \right\} \leq \left|\Delta\left(\la^*-\frac{w\eta}{4}\right)\right| + \left|\Delta\left(\la^*+\frac{\eta}{4}\right)\right| \, .
\end{align*}
Using the Mean Value Theorem, and the fact that $\Delta(\la^*)=0$, we have that \[\Delta\left(\la^*-\frac{w\eta}{4}\right)= \Delta\left(\la^*-\frac{w\eta}{4}\right) - \Delta(\la^*)=-\Delta'(x_1)\cdot\frac{w\eta}{4},\] for some $x_1\in \left(\la^*-\frac{w\eta}{4},\la^*\right)$,
and similarly that $\Delta\left(\la^*+\frac{\eta}{4}\right)=\Delta\left(\la^*+\frac{\eta}{4}\right)-\Delta(\la^*)=\Delta'(x_2)\cdot\frac{\eta}{4}$, for some $x_2\in \left(\la^*,\la^*+\frac{\eta}4\right)$. Thus, to obtain meaningful bounds for $\Delta(\la_t)$, it suffices to bound the derivative of the function $\Delta(\la_t)$ for $\la_t$ in the specified regions around $\la^*$. It holds that
    \begin{align*}
        |\Delta'(x)|&=|\bar{F}'(x)-wG'(x)|=|-f(x)-wg(x)|
        \\&\le \frac{1}{B(a,b)}x^{a-1}(1-x)^{b-1}+\frac{w}{B(c,d)}x^{c-1}(1-x)^{d-1} = H(x).
    \end{align*}
Note that all coefficients in this expression are positive. Thus, we can complete the proof since
\begin{align*}
 |\Delta(\la_t)| &\leq \left|\Delta\left(\la^*-\frac{w\eta}{4}\right)\right| + \left|\Delta\left(\la^*+\frac{\eta}{4}\right)\right| \leq
 \frac{\eta w}{4} \cdot \max_{x \in (p,\lambda^*)} \{H(x)\}  +
 \frac{\eta}{4} \cdot \max_{x \in (\lambda^*,q)} \{H(x)\}    
 \\&\leq \frac{\eta (1+w)}{4} \cdot \max_{x \in (p,q)} \{H(x)\},
\end{align*}
which is less or equal than 
\begin{align*}
    \frac{\eta (1+w)}{4} \cdot &\left(\frac{1}{B(a,b)} \cdot \max_{x \in (p,q)} \{x^{a-1}(1-x)^{b-1}\}+\frac{w}{B(c,d)} \cdot \max_{x \in (p,q)} \{x^{c-1}(1-x)^{d-1}\}\right),
\end{align*} as claimed.
\end{proof}

\begin{proof}[Proof of \Cref{lem:beta}]
  Let us first observe that $f(x) > 0$ for all $x \in [p,q]$. We have that
  \begin{align*}
    f'(x) &= (a-1)x^{a-2}(1-x)^{b-1} - x^{a-1}(b-1)(1-x)^{b-2} \\&= 
    x^{a-2}(1-x)^{b-2} ( (2-a-b)x +a -1).
  \end{align*}
  We will now inspect the behavior of function $f$ for $x \in [p,q] \subset (0,1)$. 
  Then, because $x^{a-2}(1-x)^{b-2} > 0$, unless $x\in\{0,1\}$, the monotonicity of $f$ is determined
  by the sign of the linear function $(2-a-b)x + a-1$ of $x$.

  Suppose first that $a+b \not = 2$, then $f'(\zeta) = 0$ for $\zeta = \frac{1-a}{2-a-b}$. If $a+b < 2$, then the function $(2-a-b)x + a-1$ of $x$ is strictly increasing and its zero is $\zeta$. This implies that $(2-a-b)x + a-1 < 0$ for $x < \zeta$ and $(2-a-b)x + a-1 > 0$ for $x > \zeta$. So function $f$ is strictly decreasing when $x < \zeta$ and strictly increasing when $x > \zeta$, achieving its minimum at $x=\zeta$.

  We now consider the following cases:
  \begin{itemize}[leftmargin=*]
      \item $\zeta < p$: in this case function $f$ in increasing when $x \in [p,q]$, meaning that it reaches its maximum at $x=q$         
      \item $p \leq \zeta \leq q$: in this case function $f$ in decreasing when $x \in [p,\zeta]$, and increasing when $x \in [\zeta,q]$, meaning that it reaches its maximum at $x=p$ or $x=q$, depending which of the two values $f(p)$, $f(q)$ is larger.
      \item $\zeta > q$: in this case function $f$ in decreasing when $x \in [p,q]$, meaning that it reaches its maximum at $x=p$.
  \end{itemize}

  If $a+b > 2$, then the function $(2-a-b)x + a-1$ of $x$ is strictly decreasing, that is, $(2-a-b)x + a-1 > 0$ for $x < \zeta$ and $(2-a-b)x + a-1 < 0$ for $x > \zeta$. Thus, function $f$ is strictly increasing when $x < \zeta$ and strictly decreasing when $x < \zeta$, achieving its maximum at $x=\zeta$.

  We now consider the following cases:
  \begin{itemize}[leftmargin=*]
      \item $\zeta < p$: in this case function $f$ in decreasing when $x \in [p,q]$, meaning that it reaches its maximum at $x=p$.      
      \item $p \leq \zeta \leq q$: in this case function $f$ in increasing when $x \in [p,\zeta]$, and decreasing when $x \in [\zeta,q]$, meaning that it reaches its maximum at $x=\zeta$.
      \item $\zeta > q$: in this case function $f$ in increasing when $x \in [p,q]$, meaning that it reaches its maximum at $x=q$.
  \end{itemize}
Let us finally consider the case when $a+b=2$. Then $f'(x) = x^{a-2}(1-x)^{b-2} (a-1) > 0,$ when $a>1$ and $f'(x) < 0$ when $a<1$. In the former case, function $f$ is strictly increasing for $x \in [p,q]$ and therefore reaches it maximum at $x=q$. In the latter case, function $f$ is strictly decreasing for $x \in [p,q]$ and therefore reaches it maximum at $x=p$. Finally, if $a+b=2$ and $a=1$, then $b=1$, therefore function $f$ is a constant function $f(x) = 1$ for all $x$.
\end{proof}

\end{document}